\def\moverlay{\mathpalette\mov@rlay}
\def\mov@rlay#1#2{\leavevmode\vtop{%
  \baselineskip\z@skip \lineskiplimit-\maxdimen
   \ialign{\hfil$\m@th#1##$\hfil\cr#2\crcr}}}
\newcommand{\charfusion}[3][\mathord]{
    #1{\ifx#1\mathop\vphantom{#2}\fi
        \mathpalette\mov@rlay{#2\cr#3}
      }
    \ifx#1\mathop\expandafter\displaylimits\fi}
\newcommand{\cupdot}{\charfusion[\mathbin]{\cup}{\cdot}}
\newcommand{\bigcupdot}{\charfusion[\mathop]{\bigcup}{\cdot}}
\newcommand{\La}{\mathcal{L}}
\newcommand{\Obs}{\mathcal{O}}
\newcommand{\T}{\mathcal{T}}
\newcommand{\fee}{\varphi}
\newcommand{\eps}{\varepsilon}
\newcommand{\dom}{\operatorname{dom}}
\newcommand{\id}{\operatorname{Id}}
\newcommand{\trans}[1]{\mbox{$\stackrel{#1}{\longrightarrow}$}}
\colorlet{auxdraw}{blue!25!black!70}
\colorlet{auxfill}{auxdraw!30}
\colorlet{auxtext}{black}
\tikzstyle{aux}=[draw=auxdraw,fill=auxfill,text=auxtext]
\tikzstyle{phee}=[draw=auxdraw,fill=auxfill]
\tikzstyle{every state}=[aux]
\title{Verification of Information Flow Properties \\under Rational Observation}
\author{B\'eatrice B\'erard\inst{1}\fnmsep \thanks{Supported by a
grant from Coop\'eration France-Qu\'ebec, Service Coop\'eration et
Action Culturelle 2012/26/SCAC.} \and John Mullins\inst{2}\fnmsep \thanks{supported by the NSERC Discovery Individual grant No. 13321 (Government of Canada), the FQRNT Team grant No. 167440 (Quebec's Government) and the CFQCU France-Quebec Cooperative grant No. 167671 (Quebec's Government).}}
\institute{Sorbonne Universit\'e, Universit\'e Pierre \& Marie Curie
 LIP6/MoVe, CNRS UMR 7606, Paris, France \and 
 \'Ecole Polytechnique de Montr\'eal,
 Campus of  the Universit\'e de Montr\'eal, Montreal (Quebec), Canada. \\
\email{Beatrice.Berard@lip6.fr,john.mullins@polymtl.ca}}
\begin{document}

\maketitle

\begin{abstract} Information flow properties express the
capability for an agent to infer information about secret behaviours
of a partially observable system. In a language-theoretic setting,
where the system behaviour is described by a language, we define the
class of rational information flow properties (RIFP), where observers
are modeled by finite transducers, acting on languages in a given
family $\La$. This leads to a general decidability criterion for the
verification problem of RIFPs on $\La$, implying PSPACE-completeness
for this problem on regular languages. We show that most trace-based
information flow properties studied up to now are RIFPs, including
those related to selective declassification and conditional
anonymity. As a consequence, we retrive several existing decidability
results that were obtained by ad-hoc proofs. 

\keywords{Information flow, Security predicates, Opacity,
Declassification, Conditional anonymity, Rational transducers,
Formal verification.}
\end{abstract}

\section{Introduction}

\paragraph{Motivations.}
Generic models for information flow properties aim at expressing, in a
uniform setting, the various capabilities of observers to infer
information from partially observable systems. These models provide a
description of the system behaviour, a parametric description of the
observation by the environment and the secret parts of the system, and
a security criterion. A security property is an instantiation of such
a model, with the goal of avoiding a particular information flow.
Generic models have been thoroughly investigated, for instance
in~\cite{Mantel00,focardi01classification,Mazare2008}. They propose
various classifications and comparisons of security properties, either
for transition systems or directly for traces. In the case of
transition systems~\cite{focardi01classification,Mazare2008}, the
branching structure permits to express security properties as
equivalences like weak or strong (bi-)simulations. For trace-based
models, properties are stated as relations between languages, also
called security predicates in~\cite{Mantel00}.

In addition to classification, an important question about security
properties concern their verification: given a system $S$ and a
security property $P$, does $S$ satisfy $P$\,?
Since~\cite{focardi01classification}, much attention has been given to
such questions for various classes of systems (or their sets of
traces) and security
properties~\cite{Mazare2008,DSouzaHRS11,Cassez2012,SecLTL2012,best-darondeau2012,mullins2014,ClarksonFKMRS14}.
This is the problem we consider in this work, for a subclass of 
trace-based information flow properties.
%
%
%
\paragraph{Contributions.}  
We first introduce the class of Rational Information Flow Properties
(RIFP), in a language-theoretic setting. In this class, observations
are modeled by rational transducers, called here rational observers.
For a language $L$ in some family of languages $\La$, an RIFP is then
defined as an inclusion relation $L_1 \subseteq L_2$, where $L_1$ and
$L_2$ are obtained from $L$ by inductively applying rational
observers, unions and intersections. This mechanism produces the set
of properties $RIF(\La)$, and a generic decidability result can be
stated for the verification problem of these properties.  In the
particular case of the family $\mathcal{R}eg$ of regular languages,
generated by finite automata (also called labelled transition
systems), we obtain a PSPACE-complete verification problem for the
class $RIF(\mathcal{R}eg)$.  We then proceed to show that this result
subsumes most existing decidability results for security properties on
regular languages, thus establishing the pertinency of our model.
This involves expressing properties in our formalism by designing
suitable rational observers.  We first consider the particular case
where observations are functions and we show that opacity properties
with regular secrets are RIFPs. To illustrate the expressiveness of
RIFPs, we introduce a subclass of functional rational observers that
we call rational Orwellian observers and show that several properties
including {\em intransitive non-interference} and {\em selective
  intransitive non-interference} for a language $L \in \La$ are in
$RIF(\La)$. We also reduce their verification to the verification of
opacity w.r.t. Orwellian observers. These observers are more powerful
than those considered so far in literature as they model not only
observers constrained to a fixed {\em a priori} interpretation of
unobservable events (static observers) or even to observers able to
base this interpretation on observation of previous events (dynamic
observers), but also able to re-interpret past unobservable events on
the base of subsequent observation.  We finally consider general
observers and we show that all Mantel's Basic Security Predicates
(BSPs) are RIFPs. Finally, we illustrate the applicability of our
framework by providing the first formal specification for {\em
  conditional anonymity} guaranteeing anonymity of agents unless
revocation (for instance, the identity of an agent discovered to be
dishonest can be revealed).

\paragraph{Outline.} The rest of the paper is organized as
follows. Rational Information flow properties are defined in
Section~\ref{observation}, with the associated decidability results.
RIFPs w.r.t. rational observation functions are investigated
in Section~\ref{S-Opacity}: rational opacity properties as RIFP are
presented in~\ref{ratop}, Orwellian observers in
\ref{sec:Orwellian} and their application to {\em intransitive
  non-interference} and {\em selective intransitive non-interference}
in~\ref{section:SD}.  RIFPs w.r.t. general rational observation
relations are investigated in~\ref{sec:ex}: BSPs as RIFPs are
presented in~\ref{sec:bsp} and an application of general rational
observation relation to conditional anonymity is presented
in~\ref{sec:CA}.  In Section~\ref{sec:relw}, we discuss related work
and we conclude in Section~\ref{sec:conc}.  

\section{Rational Information flow properties}
\label{observation}

We briefly recall the notions of finite automata and finite
transducers before defining rational information flow properties.

\subsection{Automata and transducers}
The set of natural numbers is denoted by $\mathbb{N}$ and the set of
\emph{words} over a finite alphabet $A$ is denoted by $A^*$, with
$\varepsilon$ for the empty word and $A^+ = A^* \setminus
\{\varepsilon\}$. The length of a word $w$ is written $|w|$ and for
any $a \in A$, $|w|_a$ is the number of occurrences of $a$ in $w$. A
\emph{language} is a subset of~$A^*$. 

\paragraph{Finite Labelled Transition Systems.}  A finite labelled
transition system (LTS or automaton for short), over a finite set
$Lab$ of labels, is a tuple $\mathcal{A} =\langle
Q,I,\Delta,F\rangle$, where $Q$ is a finite set of states, $I
\subseteq Q$ is the subset of initial states, $\Delta \subseteq Q
\times Lab \times Q$ is a finite transition relation and $F \subseteq
Q$ is a set of final states.  Note that $Lab$ can be an alphabet but
also a (subset of a) monoid.

Given two states $q, q' \in Q$, a \emph{path} from $q$ to $q'$ with
\emph{label} $u$, written as $q \trans{u} q'$, is a sequence of
transitions $q \trans{a_1} q_1$, $q_1 \trans{a_2} q_2, \cdots, \ q_{n-1}
\trans{a_n} q'$, with $a_i \in Lab$ and $q_i \in Q$, for $1 \leq i
\leq n-1$, such that $u=a_1 \cdots a_n$. The path is \emph{accepting}
if $q \in I$ and $q' \in F$, and the language of $\mathcal{A}$,
denoted by $L(\mathcal{A})$, is the set of labels of accepting paths.
A regular language over an alphabet $A$ is a subset of $A^\ast$
accepted by a finite LTS over the set of labels $A$.

\medskip \noindent \textbf{Finite Transducers.}  A finite transducer
(or transducer for short) is a finite LTS $\T$ with set of
labels $Lab \subseteq A^\ast \times B^\ast$ for two alphabets $A$ and
$B$.  A label $(u,v) \in A^\ast \times B^\ast$ is also written as
$u|v$.  The subset $L(\T)$ of $A^\ast \times B^\ast$ is a
\emph{rational relation}~\cite{sakarovitch09} from $A^\ast$ to
$B^\ast$. The transducer $\T$ is said to realize the relation $L(\T)$
(see Fig.~\ref{fig:ex1} for basic examples of transducers).

Given a rational relation $R$, we write $R(u)= \{v \in B^\ast \mid
(u,v)\in R\}$ for the image of $u \in A^\ast$, $R^{-1}(v)= \{u \in A^*
\mid (u,v) \in R\}$ for the inverse image of $v \in B^\ast$, possibly
extended to subsets of $A^\ast$ or $B^\ast$ respectively, and $\dom(R)=\{u
\in A^\ast \mid \exists v \in B^\ast, (u,v)\in R\}$ for the domain of
$R$.
The relation $R$ is complete if $\dom(R)=A^*$, it is a function if for
each $u \in \dom(R)$, $R(u)$ contains a single element $v \in B^*$.

For a subset $P$ of $A^\ast$, the identity relation $\{(u,u) \mid u
\in P\}$ on $A^\ast \times A^\ast$ is denoted by $\id_P$.  The
composition of rational relations $R_1$ on $A^\ast \times B^\ast$ and
$R_2$ on $B^\ast \times C^\ast$, denoted by $R_1 R_2$ (from left to
right) or by $R_2 \circ R_1$ (from right to left), is the rational
relation on $A^\ast \times C^\ast$ defined by $\{(u,w) \mid \exists v
\ (u,v) \in R_1 \land (v,w) \in R_2\}$ (\cite{elgot65}).  The family
of regular languages is closed under rational
relations~\cite{berstel79}.

\subsection{Rational observers}

Information flow properties are related to what an agent can learn
from a given system. In a language-based setting, the behavior of the
system is described by a language $L$ over some alphabet $A$, and some
function $\Obs$ associates with each $w\in L$ its observation
$\Obs(w)$ visible by the agent. 
We generalize the notion of
observation by defining $\Obs$ as a relation on $A^* \times B^*$ for
some alphabet $B$, but we restrict $\Obs$ to be a rational relation.

\begin{definition}[Rational observer] \label{def:ratobserver}
  A rational observer is a rational relation $\Obs$ on $A^* \times
  B^*$, for two alphabets $A$ and $B$. The observation of a word $w
  \in A^*$ is the set $\Obs(w)= \{w' \in B^* \mid (w,w') \in \Obs \}$
  and for any language $L \subseteq \dom(\Obs)$, the observation of
  $L$ is $\Obs(L)=\cup_{w\in L} \Obs(w)$.
\end{definition}

As pointed out in~\cite{DSouzaHRS11}, a large amount of information
flow properties of a language $L$ are expressed as relations of the
form $op_1(L) \subseteq op_2(L)$, for some language theoretic
operations $op_1$ and $op_2$.
Actually, we show below that $op_1$ and $op_2$ are often rational
relations corresponding to some specific observations of $L$.  Also,
we define the class of rational information flow properties as those
using rational observers, and positive boolean operations:
\begin{definition}[Rational information flow property]
  A rational information flow property (RIFP) for a language $L$ is
  any relation of the form $L_1 \subseteq L_2$, where $L_1$ and $L_2$
  are languages given by the grammar:
  \[L_1, L_2 ::=~L~|~\Obs(L_1)~|~ L_1\cup L_2~|~ L_1\cap L_2\] where
  $\Obs$ is a rational observer.
\end{definition} 

Hence, from Def.~\ref{def:ratobserver}, we recover information flow
properties of $L$ of the form $\Obs_1(L) \subseteq \Obs_2(L)$ for two
rational observers, as a particular case. However it has to be noted
that Def.~\ref{def:ratobserver} does not reduce to these inclusions
since rational relations are not closed under
intersection~\cite{berstel79}. Given a family of languages $\La$, we
define $RIF(\La)$ as the set of RIFPs for languages in $\La$. We
immediately have the following general result:
\begin{proposition}\label{prop:decidability}
  Let $\La$ be a family of languages closed under union, intersection,
  and rational transductions, such that the relation $\subseteq$ is
  decidable in $\La$.  Then any property in $RIF(\La)$ is decidable.
\end{proposition}
In particular, the class $\mathcal{R}eg$ of regular languages
satisfies the conditions above, with a PSPACE-complete inclusion
problem. We then have:
\begin{corollary}\label{cor:regularcase}
  The problem of deciding a property in $RIF(\mathcal{R}eg)$ is
  PSPACE-complete.
\end{corollary}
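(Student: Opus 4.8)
\textbf{Proof plan for Corollary~\ref{cor:regularcase}.}
The plan is to derive the result in two halves: membership in PSPACE from the general machinery of Proposition~\ref{prop:decidability}, and PSPACE-hardness from a reduction of a known hard problem. For the upper bound, I would first check that $\mathcal{R}eg$ meets the three hypotheses of Proposition~\ref{prop:decidability}: closure under union and intersection is standard, and closure under rational transductions is exactly the result of~\cite{berstel79} already cited in the excerpt. Granting these, Proposition~\ref{prop:decidability} tells us that every property in $RIF(\mathcal{R}eg)$ is decidable; the work is to observe that the decision procedure it implicitly describes runs in polynomial space. Concretely, given an RIFP $L_1 \subseteq L_2$ with $L$ presented by a finite LTS, I would argue by structural induction on the grammar generating $L_1$ and $L_2$ that each subexpression denotes a regular language accepted by an automaton whose size is polynomial in the combined size of the input LTS and the transducers realizing the rational observers appearing in the expression. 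Applying a transducer $\Obs$ to a regular language effects a polynomial blow-up (a product construction with the transducer followed by projection), union is a disjoint-union construction, and intersection is a product; since the grammar expression has fixed structure, only finitely many such operations are composed, keeping every intermediate automaton polynomially bounded.

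The subtle point in the upper bound is that one must not \emph{materialise} these automata and then run an inclusion test, because inclusion of regular languages is itself PSPACE-complete and naive determinisation is exponential. Instead I would invoke the standard fact that non-emptiness of $L_1 \cap \overline{L_2}$ can be decided in nondeterministic space logarithmic in the sizes of the automata for $L_1$ and $L_2$, by guessing an accepting run on the fly while determinising $L_2$ via an on-the-fly subset construction. The overall procedure thus uses space polynomial in the input, and by Savitch's theorem NPSPACE $=$ PSPACE, giving the membership claim. The main care here is bookkeeping: I must confirm that the polynomial bound on each automaton's size holds uniformly, so that the complement/inclusion check stays within PSPACE rather than slipping into exponential space through an unintended determinisation.

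For the lower bound, I would reduce from the PSPACE-complete universality (or inclusion) problem for nondeterministic finite automata. The key observation is that RIFPs strictly generalise plain inclusion: taking $L_1$ and $L_2$ to be built from $L$ using the identity observer (realised by a transducer for $\id_{A^*}$) already lets us encode an arbitrary inclusion instance $L(\mathcal{A}_1) \subseteq L(\mathcal{A}_2)$. More precisely, I would present a single automaton whose behaviour encodes a pair of target automata (for instance by tagging words with a leading letter and projecting the two tracks apart via rational observers), so that the RIFP instance is satisfied precisely when the underlying language inclusion holds. Since the classical inclusion/universality problem for NFAs is PSPACE-complete, this yields PSPACE-hardness of deciding $RIF(\mathcal{R}eg)$.

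I expect the genuine obstacle to lie in the upper bound rather than the hardness: specifically, in making rigorous that the on-the-fly inclusion test can be interleaved with the on-the-fly construction of the automata for $L_1$ and $L_2$ \emph{without} ever storing an exponentially large determinised machine. One must argue that at each step of the guessed run only a polynomial amount of state information (one current state per component automaton, plus one subset for the complemented side) need be retained. Provided the structural-induction size bound is established first, this interleaving is routine but must be stated carefully; the hardness direction, by contrast, follows almost immediately once the right encoding of the classical inclusion problem into the RIFP grammar is fixed.
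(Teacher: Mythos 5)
Your upper bound follows the paper's route exactly: verify the hypotheses of Proposition~\ref{prop:decidability} for $\mathcal{R}eg$, bound the automata for $L_1$ and $L_2$ polynomially, and test inclusion in polynomial space by determinising the right-hand automaton on the fly (the paper compresses all of this into ``it follows from the remark above''). Two remarks on your write-up: the inclusion test is not ``logarithmic'' in the automata sizes --- the subset maintained for the complemented side costs space \emph{linear} in the automaton for $L_2$ --- though your later accounting (one current state per component plus one subset) is the correct one and yields NPSPACE $=$ PSPACE; and the polynomial size bound genuinely requires your ``fixed structure'' assumption, i.e.\ that the RIFP expression itself has bounded size, since each observer application and each intersection multiplies automaton sizes, so with the whole expression as part of the input the naive construction is only exponential. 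The paper leaves this caveat implicit; you at least state it.

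The hardness half is where you genuinely differ from the paper, and both reductions are correct. The paper fixes the system and encodes the instance in the observers: for a regular $K$, the partial identity $\Obs_K(w)=\{w\}\cap K$ is a rational observer, and with $L=A^*$ one gets $L_1\subseteq L_2$ iff $\Obs_{L_1}(L)\subseteq\Obs_{L_2}(L)$. You do the converse: fix the observers and encode the instance in the system, taking $L$ to be a tagged union (essentially $1\cdot L_1\cup 2\cdot L_2$) and using two fixed untagging observers to recover $L_1$ and $L_2$ from the two tracks. (Note that your first suggestion, the identity observer alone, does not suffice --- it only produces the trivial inclusion $L\subseteq L$ --- but your tagging refinement repairs exactly this, as you recognise.) Your version buys a strictly more informative statement: PSPACE-hardness already holds for a \emph{fixed} property when only the system varies, which is precisely the reading of the corollary under which the PSPACE upper bound is justified; the paper's version instead gives hardness for a fixed, trivial system with the property varying. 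The two reductions are complementary, and yours is, if anything, the better fit for the corollary as stated.
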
 
\begin{proof}
  It follows from the remark above that the problem is in PSPACE.  For
  PSPACE-hardness, recall that for a language $K$, the relation
  $\Obs_K$ defined by $\Obs_K(w)=\{w\} \cap K$ is a rational observer
  if (and only if) $K$ is a regular language~\cite{sakarovitch09}. Let
  $L_1$ and $L_2$ be two regular languages, and let $\Obs_{L_1}$,
  $\Obs_{L_2}$ be the two corresponding relations, then for $L=A^*$,
  we have $L_1 \subseteq L_2$ if and only if $\Obs_{L_1}(L) \subseteq
  \Obs_{L_2}(L)$.\qed
\end{proof}
This corollary subsumes many existing decidability results for IF
properties. The rest of the paper is devoted to establish reductions
of some of these to the $RIF(\mathcal{R}eg)$ verification problem.

\section{RIF properties with rational functions}\label{S-Opacity}
In this section, we consider the generic model of opacity introduced
in~\cite{Mazare2008} for transition systems.  Opacity is parametrized
with observation functions, that are classified in~\cite{Mazare2008}
as {\em static}, {\em dynamic} or {\em Orwellian} to reflect the
computational power of the observer. In a static observation, actions
are always interpreted in the same way. It is defined as a morphism
and hence, it is a rational function. A particular case of static
observer is the projection $\pi_B$ from $A^*$ into $B^*$ for a
subalphabet $B$ of $A$, so that $\pi_B(a)= a$ if $a\in B$ and
$\pi_B(a)= \eps$ otherwise. In a dynamic observation function,
interpretation of the current action depends on the sequence of
actions observed so far and hence, it is also a rational function.

\begin{example}
  In Fig.~\ref{fig:ex1} (where all states are final states), the left
  hand side depicts a transducer realizing the projection from
  $\{a,b\}^*$ onto $\{b\}^*$ while the right hand side depicts a
  transducer realizing the following dynamic observation function
  (translated from~\cite{Cassez2012}): The first occurrence of the
  first action is observed, then nothing is observed until the first
  occurrence of the second action ($b$ if the trace begins with $a$
  and $a$ otherwise) and everything is observed in clear as soon as
  this second action occurs that is, $\Obs(aa^\ast bu) = abu$ and
  $\Obs(bb^\ast au) = bau$ for any $u \in \{a, b\}^\ast$.
\end{example}

\begin{figure}[htbf]
\unitlength=0.8mm
\gasset{loopdiam=6,Nh=8,Nw=8,Nmr=4}
\begin{picture}(80,45)(-15,-4)
\node[linecolor=blue!70,fillcolor=blue!20](s0)(30,20){$0$}
\imark(s0)
\drawloop(s0){$a|\eps,b|b$}

\put(80,0){\node[linecolor=blue!70,fillcolor=blue!20](q0)(10,30){$0$}
\imark(q0)
\node[linecolor=blue!70,fillcolor=blue!20](q1)(30,30){$1$}
\node[linecolor=blue!70,fillcolor=blue!20](q3)(50,30){$3$}
\node[linecolor=blue!70,fillcolor=blue!20](q2)(30,10){$2$}

\drawedge(q0,q1){$a|a$} \drawedge[ELside=r](q0,q2){$b|b$}
\drawloop[loopangle=270](q2){$b|\eps$}
\drawedge[ELside=r](q2,q3){$a|a$}
\drawloop(q1){$a|\eps$}
\drawedge(q1,q3){$b|b$}
\drawloop(q3){$a|a,b|b$}
}
\end{picture}
\caption{Examples of transducers realizing basic observation
  functions}
\label{fig:ex1}
\end{figure}
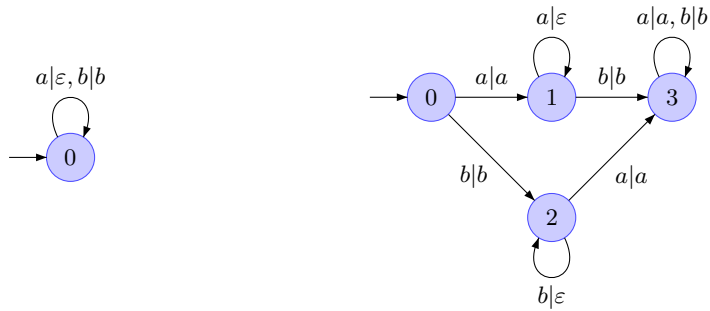

In Orwellian observation functions, the current observation depends
not only on the prefix of actions observed so far but also on the
complete trace. It reflects the capability of the observer to use
subsequent knowledge to re-interpret past actions. In the rest of this
section we will study opacity w.r.t.\ rational Orwellian observers.

\subsection{Opacity w.r.t. rational functions} \label{ratop}
In its original setting, opacity is related to a language $L \subseteq
A^\ast$ modelling the behaviour of a system, a function $\Obs$ from $A^*$ to
$B^*$ and in addition, a predicate $\varphi$ given as a subset
of $L$, describing a secret. Two words $w$ and $w'$ of $L$ are
observationally equivalent for $\Obs$ if $\Obs(w) = \Obs(w')$. The
\emph{observation class} of $w$ in $L$ is the set $[w]_{\cal O}^L=\{
w' \in L \mid \Obs(w) = \Obs(w')\}=L \cap \Obs^{-1}(\Obs(w))$.

The secret $\fee$ is opaque in $L$ for $\Obs$ if for any word in
$\fee$, there is another word in $L\setminus \fee$ such that $w$ and
$w'$ are observationally equivalent. Hence, $\fee$ is opaque if and
only if ${\cal O}(\fee) \subseteq {\cal O}(L \setminus \fee)$, which
we take as definition when $\Obs$ is a rational function:

\begin{definition}[Rational Opacity]
  \label{def:opacity-disclosure}
  Given a language $L \subseteq A^\ast$, a language $\fee \subseteq L$
  and a rational function $\Obs$, $\fee$ is rationally opaque in $L$
  for $\Obs$ if ${\cal O}(\fee) \subseteq {\cal O}(L \setminus \fee)$.
\end{definition}

 
The information flow deduced by an observer when the system is not
opaque is captured by the notion of secret disclosure: A word $w\in L$
discloses the secret $S$ w.r.t. $\Obs$ if $[w]_{\cal O}^{L} \subseteq
\fee$. We have:
\begin{proposition}
  Rational opacity properties on languages in some family $\La$ for
  regular secrets belong to $RIF(\La)$.
\end{proposition}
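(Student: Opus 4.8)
The plan is to rewrite the opacity condition $\Obs(\fee)\subseteq\Obs(L\setminus\fee)$ as an inclusion $L_1\subseteq L_2$ generated from $L$ by the RIFP grammar, the key being to produce both the secret $\fee$ and its relative complement $L\setminus\fee$ as images of $L$ under rational observers. Note that the grammar only gives us access to $L$ as a base term, so the whole difficulty is manufacturing $\fee$ and $L\setminus\fee$ from $L$ alone.

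First I would reuse the observer appearing in the proof of Corollary~\ref{cor:regularcase}: for a regular language $K$ the relation $\Obs_K$ with $\Obs_K(w)=\{w\}\cap K$ is a rational observer, and for any $L$ one has $\Obs_K(L)=L\cap K$. Since $\fee$ is regular by hypothesis and $\fee\subseteq L$, the observer $\Obs_\fee$ gives $\Obs_\fee(L)=L\cap\fee=\fee$. Because regular languages are closed under complement, $A^\ast\setminus\fee$ is regular as well, so $\Obs_{A^\ast\setminus\fee}(L)=L\cap(A^\ast\setminus\fee)=L\setminus\fee$. Thus both $\fee$ and $L\setminus\fee$ arise as grammar terms of the form $\Obs(L)$.

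Next I would apply the observation function $\Obs$ of the opacity instance, which is in particular a rational observer, to these two terms. Since the grammar permits the nesting $\Obs(L_1)$ of observer applications, with a possibly different observer at each step, the expressions $\Obs(\Obs_\fee(L))$ and $\Obs(\Obs_{A^\ast\setminus\fee}(L))$ are both legal right-hand sides. Unfolding the inner observers yields $\Obs(\Obs_\fee(L))=\Obs(\fee)$ and $\Obs(\Obs_{A^\ast\setminus\fee}(L))=\Obs(L\setminus\fee)$, so the opacity condition becomes exactly the inclusion $\Obs(\Obs_\fee(L))\subseteq\Obs(\Obs_{A^\ast\setminus\fee}(L))$, an RIFP, which therefore belongs to $RIF(\La)$.

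The only step carrying real content is the observation that intersection with a fixed regular language is itself realizable by a rational observer; this is precisely what lets me build $\fee$ and $L\setminus\fee$ out of $L$. Everything else is bookkeeping against the grammar, and the one thing I would double-check is the type-matching of the nested observers — an observer from $A^\ast$ to $A^\ast$ followed by $\Obs$ from $A^\ast$ to $B^\ast$ — which is immediate.
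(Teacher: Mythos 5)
Your proposal is correct and is essentially the paper's own proof: the paper likewise reduces opacity to the inclusion $\Obs(\Obs_\fee(L)) \subseteq \Obs(\Obs_{\neg\fee}(L))$, using the observer $\Obs_K$ (intersection with a regular set $K$) from the proof of Corollary~\ref{cor:regularcase}. Your write-up merely spells out the details the paper leaves implicit, namely that $\Obs_{\neg\fee}$ is intersection with the regular complement $A^\ast\setminus\fee$ and that nesting the rational function $\Obs$ over these observers is licensed by the RIFP grammar.
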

\begin{proof}
  As already seen in the proof of Corollary~\ref{cor:regularcase},
  intersection with a regular set $K$ is a rational observation
  $\Obs_K$. Since the secret $\fee$ is regular, opacity of $\varphi$
  in $L$ for $\Obs$ is equivalent to $\Obs (\Obs_\fee(L)) \subseteq
  \Obs (\Obs_{\neg \fee}(L))$.\qed
\end{proof}

Non-interference and weak and strong anonymity have been shown to
reduce to opacity w.r.t. suitable observers (see \cite{Mazare2008}).
In~\cite{Cassez2012}, PSPACE-hardness is established for opacity of
regular secrets for regular languages w.r.t. static and dynamic
observers.

\subsection{Rational Orwellian observers}\label{sec:Orwellian}
In the sequel, we denote the disjoint union by $\cupdot$.  In our
context, Orwellian observation functions from~\cite{Mazare2008} are
realized by rational Orwellian observers:

\begin{definition}[Rational Orwellian Observer] \label{def:OObs} A
  {\em rational Orwellian observer} is a rational function, given as a
  disjoint union of functions: $\Obs = \cupdot_{1 \leq i \leq n}
  \Obs_i$, where the domains $\{\dom(\Obs_i), 1 \leq i \leq n\}$ form
  a partition of $A^*$. The partial functions $\Obs_i$ are called
  \emph{views}.
\end{definition}

Note that $\Obs$ is a function because the domains of the views are
disjoint.  We simply call these functions Orwellian observers for
short, since there is no ambiguity in our context. The terminology
\emph{Orwellian} comes from the ability of the observer to somehow see
in the future, as illustrated in the following example.

\begin{example}[A simple example]
\label{ex:simple}
The function $\Obs = \Obs_a \cupdot \Obs_b \cupdot \Obs_\eps$ is an
Orwellian observer on $\{a, b\}$ realized by the transducer depicted
in Fig.~\ref{fig:OPex1}. The function is defined by $\Obs(\eps)=\eps$
and:

\[
\Obs(w) = \left\{ \begin{array}{l@{ }l}
\pi_{\{b\}}(w)  & 
  \mbox{ if the last letter of }  w  \mbox{ is } a\\
  \pi_{\{a\}}(w) &
   \mbox{ if the last letter of }  w  \mbox{ is } b.
\end{array}\right.
\]
Hence, the observer interpretation of the current event depends on the
last event of the trace. If it is $a$ then ${\cal O}$ interprets the
trace as its projection over $\{b\}$ and the other way around, if it
is $b$ then it interprets the trace as its projection over $\{a\}$.

\begin{figure}[ht]
  \begin{center}
\unitlength=0.8mm
\gasset{loopdiam=6,Nh=8,Nw=8,Nmr=4}
    \begin{picture}(130,20)(-10,0)
    \node[linecolor=blue!70,fillcolor=blue!20](A0)(-10,5){$p_0$}
    \nodelabel[ExtNL=y,NLdist=8,AHnb=0,ilength=-2,NLangle= 180](A0){$\Obs_a :$} 
    \imark(A0) 
    
    \node[linecolor=blue!70,fillcolor=blue!20](A1)(15,5){$p_1$}
\fmark[fangle=270](A1)

\node[linecolor=blue!70,fillcolor=blue!20](A3)(55,5){$q_0$}
 \nodelabel[ExtNL=y,NLdist=8,AHnb=0,ilength=-2,NLangle=  180](A3){$\Obs_b:$} 
          \imark(A3)
                   
\node[linecolor=blue!70,fillcolor=blue!20](A4)(80,5){$q_1$} 
\fmark[fangle=270](A4)
                               
 \node[linecolor=blue!70,fillcolor=blue!20](A6)(120,5){$r_0$}
 \nodelabel[ExtNL=y,NLdist=8,AHnb=0,ilength=-2,NLangle=  180](A6){
$\Obs_\eps :$} 
\imark(A6) \fmark[fangle=270](A6)
          
    \drawedge(A0,A1){$a|\eps$}
      \drawloop[loopangle=90](A0){$a|\eps, \ b|b$}
            \drawloop[loopangle=90](A3){$a|a, \ b|\eps$}
            \drawedge(A3,A4){$b|\eps$}
    \end{picture}
  \end{center}
  \caption{The Orwellian observer ${\cal O} = \Obs_a \cupdot \Obs_b
    \cupdot \Obs_\eps$.}
\label{fig:OPex1}
\end{figure}

Despite its observational power, this observer is not able to deduce
whether the first event in the trace in $L = (a+b)(a^\ast +
b^\ast)(a+b)$ is an $a$. Indeed, let $\varphi = a(a^\ast +
b^\ast)(a+b)$ be the secret, corresponding to the set of traces in $L$
with $a$ as the first event. Then $\varphi$ is opaque w.r.t. $\Obs$ in
$L$. To see this, if a secret trace $w$ is observed, examine what
$\cal O$ can deduce from this observation.
\begin{itemize}
\item If $w$ ends with an $a$ then $\Obs(w) = b^n$ for some $n
  \geq 0$ but $b^na \not \in \varphi$ is also observed by $b^n$.
\item If $w$ ends with a $b$ then $\Obs(w) = a^n$ for some $n
  \geq 0$ but $ba^nb \not \in \varphi$ is also observed by $a^n$.
\end{itemize}

\end{example}

\begin{example}[Static and dynamic observers]
  Static and dynamic observations are of course special cases of
  Orwellian observers, where $\Obs$ consists of a single complete
  view. Note that static and dynamic observations preserve prefixes
  while it is not necessarily the case for Orwellian observations (see
  examples~\ref{ex:simple} and~\ref{ex:pi_{o,d}}).
\end{example}

\begin{example}[Intransitive non-interference]
\label{ex:pi_{o,d}}
Let $A = V \cupdot C \cupdot D$ be a partition of the alphabet into
visible actions in $V$, confidential actions in $C$ and
declassification actions in $D$. When a declassification action occurs
in a word, the prefix is observed in clear. The corresponding
observation function is called in~\cite{mullins2014} the
\emph{projection on $V$ unless $D$}, and defined as a mapping $
\pi_{V,D} : A^* \rightarrow A^*$ such that $\pi_{V,D} (\epsilon) =
\epsilon$ and
$$
\pi_{V,D} (u a) = \left\{ \begin{array}{l@{ }l} 
u a \ & \ \mbox{if} \ a \in D, \\
\pi_{V,D} (u) a \ & \ \mbox{if} \ a \in V, \\
    \pi_{V,D} (u) & \ \mbox{otherwise.}
\end{array}\right.
$$ 
A language $L$ satisfies \emph{intransitive non-interference (INI)} if
$\pi_{V,D} (L) \subseteq L$. Again:

\begin{proposition}
  The function $\pi_{V,D}$ is an Orwellian observer, hence INI for
  languages in $\La$ belongs to $RIF(\La)$.
\end{proposition}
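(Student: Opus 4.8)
The plan is to verify the two assertions in turn: first that $\pi_{V,D}$ is a rational Orwellian observer in the sense of Definition~\ref{def:OObs}, and then to observe that the INI condition $\pi_{V,D}(L)\subseteq L$ already has the syntactic shape required by the RIFP grammar, so that the second claim follows immediately from the first.

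For the first part, the key is to unfold the recursive definition of $\pi_{V,D}$ into a closed form that exposes the role of the \emph{last} declassification action. By induction on word length one checks that $\pi_{V,D}$ acts as the projection $\pi_V$ on words containing no letter of $D$, whereas on a word $w_1\, d\, v$ with $d\in D$ the last declassification letter and $v\in(V\cup C)^*$ it returns $w_1\, d\,\pi_V(v)$: the whole prefix up to and including the last $D$-action is copied verbatim (so confidential letters occurring before it are revealed), and only the suffix after it is projected onto $V$. This closed form is precisely the ``Orwellian'' re-interpretation of past confidential events in the light of a subsequent declassification, and it suggests the partition $A^* = (V\cup C)^* \cupdot A^*D(V\cup C)^*$. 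The two views are then $\Obs_1$, the restriction of $\pi_V$ to the $D$-free words $(V\cup C)^*$, and $\Obs_2$ on $A^*D(V\cup C)^*$ sending $w_1 d v \mapsto w_1 d\,\pi_V(v)$. Both domains are regular, disjoint, and cover $A^*$, so it only remains to exhibit each view as a rational function; then $\pi_{V,D}=\Obs_1\cupdot\Obs_2$ is an Orwellian observer. I would realize $\pi_{V,D}$ by a single transducer with a ``clear'' state $c$ that copies every input letter unchanged, a ``project'' state $p$ copying $V$-letters and erasing $C$-letters (with no $D$-transition), a transition $c\trans{d|d}p$ for each $d\in D$, both states initial and $p$ the only final state.

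The step I expect to be the main obstacle is checking single-valuedness, i.e. that $\Obs_1,\Obs_2$ are genuine functions and not merely relations. The point is that the nondeterministic guess of where the last $D$ occurs must be resolved uniquely by the acceptance condition: the absence of $D$-transitions out of $p$ forces the switch $c\to p$ to happen exactly at the last declassification letter, while runs staying in $p$ from the start handle the $D$-free words and die on the first $D$. Verifying that each word in $A^*$ therefore admits a unique accepting run, producing the closed-form output above, is what establishes that the relation realized is a function.

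For the second part no further work is needed beyond pattern-matching against the definition of RIFP. Since $\pi_{V,D}$ is a rational observer, $\pi_{V,D}(L)$ is an instance of $\Obs(L_1)$ with $L_1=L$, and $L$ is itself a base term of the grammar $L_1,L_2 ::= L \mid \Obs(L_1)\mid L_1\cup L_2 \mid L_1\cap L_2$. Hence the INI requirement $\pi_{V,D}(L)\subseteq L$ is literally a relation of the form $L_1\subseteq L_2$ produced by the grammar, so it is an RIFP; for $L\in\La$ it therefore lies in $RIF(\La)$.
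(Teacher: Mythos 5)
Your proof is correct and takes essentially the same route as the paper: the same partition of $A^*$ into the $D$-free words $(V\cup C)^*$ and the words $A^*D(V\cup C)^*$, with the same two views (the restriction of $\pi_V$ to $D$-free words, and copy-verbatim-then-project for words containing a declassification) — indeed your single two-state transducer with both states initial is exactly the paper's two view-transducers $\Obs_\epsilon$ and $\Obs_D$ glued together. The details you make explicit (the closed form of $\pi_{V,D}$ by induction, and the uniqueness-of-accepting-run argument showing the nondeterministic guess of the last $D$-occurrence is resolved by acceptance, hence single-valuedness) are left implicit in the paper's figure-based proof, and they are correct.
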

\begin{proof}
  The function $\pi_{V,D}$ is a sum of two views: $\pi_{V,D}=
  \Obs_{\eps} \cupdot \Obs_D$, realized by the transducers depicted in
  Fig.~\ref{fig:pi_{o,d}}.\qed
\end{proof}

\begin{figure}[ht]
  \begin{center}
    \unitlength=0.8mm
\gasset{loopdiam=6,Nh=8,Nw=8,Nmr=4}
    \begin{picture}(90, 30)(-15,0)
\node[linecolor=blue!70,fillcolor=blue!20](A0)(-10,5){$p_0$}
\nodelabel[ExtNL=y,NLdist=8,AHnb=0,ilength=-2,NLangle=  180](A0){
$\Obs_\epsilon :$} 
    \imark(A0) \fmark[fangle=270](A0)
\node[linecolor=blue!70,fillcolor=blue!20](A3)(40,5){$q_0$}
\nodelabel[ExtNL=y,NLdist=8,AHnb=0,ilength=-2,NLangle= 180](A3){$\Obs_D :$} 

\imark(A3)
\node[linecolor=blue!70,fillcolor=blue!20](A4)(80,5){$q_1$}
\fmark[fangle=270](A4)
          
      \drawloop(A0){$\begin{array}{l} v|v, \ v \in V \\ 
c|\eps, \ c \in C \end{array}$} 
 \drawloop(A3){$a|a, \ a \in A$}
 \drawloop(A4){$\begin{array}{l} v|v, \ v \in V \\ 
c|\eps, \ c \in C \end{array}$}
            \drawedge(A3,A4){$d|d, \ d \in D$}
    \end{picture}
  \end{center}
  \caption{The Orwellian observer  $\pi_{V,D} = \Obs_\epsilon \cupdot \Obs_D$.}
\label{fig:pi_{o,d}}
\end{figure}

It has been shown in~\cite{mullins2014} that a language $L$ satisfies
\emph{intransitive non-interference (INI)} if and only if
$\varphi_{INI} = \{w \in L \ \mid \pi_{V,D}(w) \neq w \}$ is opaque in
$L$ w.r.t. the observer $\pi_{V,D}$.
\end{example}
This can be generalized as follows, showing that many non-interference like
properties reduce to opacity w.r.t. Orwellian observers.
\begin{proposition}\label{prop:incl-op}
  Let $\Obs$ be a rational idempotent function (\textit{i.e.} $\Obs^2
  = \Obs$).  Then $\Obs(L) \subseteq L$ if and only if $\varphi_{\Obs}
  = \{w \in L \ \mid \Obs(w) \neq w \}$ is opaque in $L$ for $\Obs$.
\end{proposition}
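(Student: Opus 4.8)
The plan is to unfold both conditions against the fixed-point structure that idempotence imposes on $\Obs$. Write $F = L \setminus \varphi_{\Obs} = \{w \in L \mid \Obs(w) = w\}$ for the set of words of $L$ fixed by $\Obs$, so that $\varphi_{\Obs}$ is exactly the set of words of $L$ that $\Obs$ moves, and $L = \varphi_{\Obs} \cupdot F$. The first step is to record the single consequence of idempotence I will use repeatedly: for every word $w$ the image $\Obs(w)$ is a fixed point, since $\Obs(\Obs(w)) = \Obs^2(w) = \Obs(w)$. In particular $\Obs(F) = F$, and for any word $v$ in the range of $\Obs$ one has $v \in F$ if and only if $v \in L$, because such a $v$ is automatically fixed.

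Next I would rewrite each side against this structure. By the opacity definition, $\varphi_{\Obs}$ being opaque in $L$ for $\Obs$ means $\Obs(\varphi_{\Obs}) \subseteq \Obs(L \setminus \varphi_{\Obs}) = \Obs(F) = F$, so opacity is equivalent to $\Obs(\varphi_{\Obs}) \subseteq F$. On the other side, the decomposition $L = \varphi_{\Obs} \cupdot F$ gives $\Obs(L) = \Obs(\varphi_{\Obs}) \cup \Obs(F) = \Obs(\varphi_{\Obs}) \cup F$; since $F \subseteq L$, the inclusion $\Obs(L) \subseteq L$ is equivalent to $\Obs(\varphi_{\Obs}) \subseteq L$.

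It then remains to match the two reformulations, $\Obs(\varphi_{\Obs}) \subseteq F$ versus $\Obs(\varphi_{\Obs}) \subseteq L$. One direction is immediate from $F \subseteq L$. For the converse I would invoke the key property: every element of $\Obs(\varphi_{\Obs})$ has the form $\Obs(w)$ and is therefore a fixed point, so if it additionally lies in $L$ it lies in $F$; hence $\Obs(\varphi_{\Obs}) \subseteq L$ forces $\Obs(\varphi_{\Obs}) \subseteq F$. This closes the chain of equivalences and proves the proposition.

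There is no substantial obstacle here; the whole content is that idempotence collapses, on the range of $\Obs$, the distinction between lying in $L$ and lying in the fixed-point set $F$. The only point demanding a little care is the bookkeeping with the opacity definition, namely correctly identifying $\Obs(L \setminus \varphi_{\Obs})$ with $F$ rather than with some larger set; this is precisely where idempotence enters, through the equality $\Obs(F) = F$.
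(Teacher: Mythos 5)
Your proof is correct and rests on exactly the same key fact as the paper's: idempotence makes every image $\Obs(w)$ a fixed point of $\Obs$, so it lies in $L \setminus \varphi_{\Obs}$ as soon as it lies in $L$; your chain of set-level equivalences (via $F = L \setminus \varphi_{\Obs}$) is essentially a repackaging of the paper's element-wise two-direction argument, whose opacity witness is likewise $w' = \Obs(w)$. One small slip in your closing remark: the equality $\Obs(F) = F$ needs no idempotence at all (elements of $F$ are fixed by definition), so that is not where idempotence enters; it is genuinely needed only where you first invoke it, namely to guarantee that every element of $\Obs(\varphi_{\Obs})$ is a fixed point.
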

\begin{proof}
  First assume that $\Obs(L) \subseteq L$ and let $w \in
  \varphi_{\Obs}$. Then $\Obs(w) \neq w$. For $w'=\Obs(w)$, we have:
  $w' \in L$ and $\Obs(w')= \Obs^2(w)=\Obs(w)=w'$, hence $w' \notin
  \varphi_{\Obs}$. Opacity of $\varphi_{\Obs}$ follows.\\
  Conversely, assume that $\varphi_{\Obs}$ is opaque and let $w$ be an
  element of $L$. If $w \in \varphi_{\Obs}$, then there exists $w' \in
  L \setminus \varphi_{\Obs}$ such that $\Obs(w) = \Obs(w')$. Since
  $w' \notin \varphi_{\Obs}$, $\Obs(w') = w'$, hence $w'=\Obs(w) \in
  L$. Otherwise, $w \notin \varphi_{\Obs}$ implies $\Obs(w) = w \in
  L$.  In all cases, $\Obs(w) \in L$ and $\Obs(L) \subseteq L$.\qed
\end{proof}

Finally, we can state the following:

\begin{proposition}
  Given an Orwellian observer $\cal O$, deciding opacity of regular
  secrets w.r.t. ${\cal O}$ for regular languages is PSPACE-complete.
\end{proposition}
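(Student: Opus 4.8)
The plan is to prove membership in PSPACE and PSPACE-hardness separately.

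For membership, I would reduce opacity w.r.t.\ an Orwellian observer to a property in $RIF(\mathcal{R}eg)$ and invoke Corollary~\ref{cor:regularcase}. By Definition~\ref{def:opacity-disclosure}, opacity of $\fee$ in $L$ for $\Obs$ is the inclusion $\Obs(\fee) \subseteq \Obs(L \setminus \fee)$. The secret $\fee$ is regular and $L$ is regular, so $\fee$ and $L \setminus \fee$ are both regular, and intersection with a regular set is itself a rational observer (as used in the proof of Corollary~\ref{cor:regularcase}, writing $\Obs_\fee$ and $\Obs_{\neg\fee}$). Thus it suffices to check that an Orwellian observer $\Obs = \cupdot_{1 \leq i \leq n} \Obs_i$ is a genuine rational function, i.e.\ that it is realized by a finite transducer. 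I would argue that each view $\Obs_i$ is a rational function whose domain is regular (the partition $\{\dom(\Obs_i)\}$ of $A^*$ is into regular sets in all our instances, and this regularity is what makes the disjoint union effectively constructible), so the transducer for $\Obs$ is obtained by taking the disjoint union of the transducers for the views. Then the opacity condition reads $\Obs(\Obs_\fee(L)) \subseteq \Obs(\Obs_{\neg\fee}(L))$, which fits the RIFP grammar, and Corollary~\ref{cor:regularcase} gives a PSPACE upper bound.

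For PSPACE-hardness, I would reuse the hardness already established for static and dynamic observers. As noted in the text following Definition~\ref{def:opacity-disclosure}, static and dynamic observations are a special case of Orwellian observers, namely those consisting of a single complete view (so $n = 1$). Since~\cite{Cassez2012} establishes PSPACE-hardness of opacity of regular secrets for regular languages w.r.t.\ static (or dynamic) observers, and every such instance is already an instance of opacity w.r.t.\ an Orwellian observer, hardness transfers immediately: the identity embedding of the static/dynamic opacity problem into the Orwellian opacity problem is a trivial (log-space) reduction. Combining the two bounds yields PSPACE-completeness.

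The main obstacle is the membership argument, and specifically justifying that the disjoint union of the views really yields a transducer realizing a \emph{function} on all of $A^*$ whose construction stays within the regular world. One must verify that the domain of each view is regular so that restricting a view's transducer to its domain is effective, and that the resulting relation is single-valued because the domains partition $A^*$ (this single-valuedness is exactly the remark made just after Definition~\ref{def:OObs}). A subtle point worth checking is that an Orwellian observer need not preserve prefixes, so one cannot appeal to the simpler structure of static/dynamic observers here; the argument must rest purely on the closure of $\mathcal{R}eg$ under rational transductions, union, and intersection, together with decidability of inclusion, all of which are available by the hypotheses feeding Proposition~\ref{prop:decidability}.
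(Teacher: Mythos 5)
Your proof is correct and takes essentially the same route as the paper: membership in PSPACE via the RIFP formulation of opacity (intersection with the regular sets $\fee$ and $L\setminus\fee$ as rational observers, then Corollary~\ref{cor:regularcase}), and hardness by noting that static and dynamic observers are Orwellian observers with a single complete view, for which \cite{Cassez2012} already gives PSPACE-hardness. The only divergence is that your ``main obstacle'' is actually a non-issue: by Definition~\ref{def:OObs} an Orwellian observer is stipulated to be a rational function (and in any case rational relations are closed under union, each view's domain is automatically regular, and disjointness of the domains gives single-valuedness), so no additional construction or regularity argument is required.
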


\begin{proof}
  Corollary~\ref{cor:regularcase} implies that the problem is in
  PSPACE. For the PSPACE-hardness, it suffices to observe that dynamic
  or static observers are particular Orwellian observers for which the
  problem is already PSPACE-hard.\qed
\end{proof}

In the next paragraph, we show that the observation function defined
for selective declassification is an Orwellian observer.

\subsection{Selective declassification} \label{section:SD}

Intransitive non-interference with selective declassification (INISD)
generalizes INI by allowing to each downgrading action to declassify
only a subset of confidential actions. It has recently been proposed
in~\cite{best-darondeau2012} for a class of Petri net languages (that
does not include rational languages). To formalize INISD, the alphabet
is partitioned into $A = V \cupdot C \cupdot D$ as in
example~\ref{ex:pi_{o,d}}. In addition, with each declassification
action $d \in D$ is associated a specific set $C(d) \subseteq C$ of
confidential events, with the following meaning: An occurrence of $d$
in a word $w$ declassifies all previous occurrences of actions from
$C(d)$, hence these actions are observable while other confidential
events in $C$ are not.

Let $\Sigma(D) = \{\sigma \in D^* \mid |w|_d \leq 1 \mbox{ for all } d
\in D\}$ be the set of repetition-free sequences of downgrading
actions in $D$. With any $\sigma = d_1 d_2 \ldots d_n \in \Sigma(D)$,
we associate the sets:
\begin{eqnarray*}
A_\sigma & = & V \cupdot C \cupdot \{d_1, \ldots, d_n\}\\
 W_\sigma  & =  & A_\sigma^\ast \cdot d_1 \cdot (A_\sigma \setminus \{d_1\})^\ast  
  \cdot d_2  \cdot \ldots  \cdot d_n \cdot 
  (A_\sigma \setminus \{d_1, \ldots, d_n\})^\ast \\
  V_{\sigma,i} & = & V \cup \{d_j, \ i+1 \leq j \leq n\} \cup 
\bigcup_{j=i+1}^n C(d_j), \mbox{ for every } i \in \{0, \ldots,n\} 
\end{eqnarray*}
with the convention $V_{\sigma,n}=V$, and the projections
$\pi_{\sigma,i} : A^* \rightarrow V_{\sigma,i}^*$ for every $i \in
\{0, \ldots,n\}$.

For a given $\sigma=d_1 \ldots d_n \in \Sigma(D)$, the set $W_\sigma$
contains the words $w$ in $A^\ast$ where the set of all downgrading
actions is precisely $\{d_1, \ldots, d_n\}$ and such that the last
occurrence of $d_i$ precedes the last occurrence of $d_{i+1}$ for any
$1 \leq i \leq n-1$. Note that the family of all these sets
$\{W_\sigma, \ \sigma \in \Sigma \}$ form a partition of $A^*$.
Besides, the projection $\pi_{\sigma,i}$ observes in clear any
confidential event in $\cup_{j=i+1}^nC(d_j)$, in addition to the
visible events in $V$ and the declassifying events from $\sigma$.



\smallskip Now the property called INISD in~\cite{best-darondeau2012}
can be stated in our general context for a language $L$ as follows:
For any $\sigma \in \Sigma(D)$ and for any word $w = w_0 d_1 w_1 \ldots
d_n w_n$ in $L \cap W_\sigma$, there exists a word $w' = w'_0 d_1 w'_1
\ldots d_n w'_n$ in $L\cap W_\sigma$ such that for every $i \in \{0,
\ldots,n\}$, $w'_i \in V_{\sigma,i}^*$ and $\pi_{\sigma,i}(w_i) =
\pi_{\sigma,i}(w'_i)$. We have:

\begin{proposition}\label{prop:INISD}
  The INISD property for languages in $\La$ belongs to $RIF(\La)$.
\end{proposition}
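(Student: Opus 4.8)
The plan is to produce a single rational observer $\Obs$ — an \emph{Orwellian} one — for which the INISD property is literally the inclusion $\Obs(L) \subseteq L$; this is then an RIFP by definition, taking $L_1 = \Obs(L)$ and $L_2 = L$ in the grammar. First I would define the observer block-wise along the partition $\{W_\sigma \mid \sigma \in \Sigma(D)\}$. For $\sigma = d_1 \cdots d_n \in \Sigma(D)$, every $w \in W_\sigma$ factors uniquely as $w = w_0 d_1 w_1 \cdots d_n w_n$ with $w_0 \in A_\sigma^*$ and $w_i \in (A_\sigma \setminus \{d_1,\ldots,d_i\})^*$, so I set
\[
\Obs_\sigma(w_0 d_1 w_1 \cdots d_n w_n) = \pi_{\sigma,0}(w_0)\, d_1\, \pi_{\sigma,1}(w_1)\, d_2 \cdots d_n\, \pi_{\sigma,n}(w_n),
\]
and let $\Obs = \cupdot_{\sigma \in \Sigma(D)} \Obs_\sigma$. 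Since $D$ is finite, $\Sigma(D)$ is finite; since the $W_\sigma$ partition $A^*$ and $\dom(\Obs_\sigma)=W_\sigma$, the views have pairwise disjoint domains, so $\Obs$ is an Orwellian observer in the sense of Def.~\ref{def:OObs}, hence a rational function and a fortiori a rational observer.

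The key reduction I would then establish is that the existential witness in the INISD statement is \emph{forced} to equal $\Obs(w)$. Fix $\sigma$ and $w = w_0 d_1 \cdots d_n w_n \in L \cap W_\sigma$, and suppose $w' = w'_0 d_1 \cdots d_n w'_n \in L \cap W_\sigma$ satisfies $w'_i \in V_{\sigma,i}^*$ and $\pi_{\sigma,i}(w_i) = \pi_{\sigma,i}(w'_i)$ for all $i$. Since $\pi_{\sigma,i}$ fixes any word over $V_{\sigma,i}$, we get $w'_i = \pi_{\sigma,i}(w'_i) = \pi_{\sigma,i}(w_i)$, so $w' = \Obs_\sigma(w) = \Obs(w)$ is unique. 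Moreover $\Obs(w) \in W_\sigma$ automatically, because $V_{\sigma,i} \subseteq A_\sigma \setminus \{d_1,\ldots,d_i\}$, so the membership $w' \in W_\sigma$ is free. Hence ``there exists $w' \in L \cap W_\sigma$ with the stated properties'' is equivalent to ``$\Obs(w) \in L$''. Quantifying over $w$ and $\sigma$ and using $\Obs(L) = \bigcup_\sigma \Obs_\sigma(L \cap W_\sigma)$, the entire INISD property becomes exactly $\Obs(L) \subseteq L$, which lies in $RIF(\La)$. One may alternatively route this through opacity: $\Obs$ is idempotent, since every block of $\Obs(w)$ already lies in $V_{\sigma,i}^*$ and is thus fixed by a second application, so Proposition~\ref{prop:incl-op} rephrases the inclusion as opacity of $\varphi_{\Obs}$ in $L$ for $\Obs$.

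The main obstacle I expect is verifying that each view $\Obs_\sigma$ is genuinely rational, since the factorization of $w \in W_\sigma$ is governed by the \emph{last} occurrence of each $d_i$, a non-local condition. I would discharge this with a nondeterministic transducer having states $0,\ldots,n$ for the current block: in state $i$ it loops on every letter $a \in A_\sigma \setminus \{d_1,\ldots,d_i\}$, emitting $a$ when $a \in V_{\sigma,i}$ and $\eps$ otherwise, and it moves from state $i-1$ to state $i$ on input $d_i$, emitting $d_i$. The block constraints built into $W_\sigma$ enforce correctness automatically: because the loop alphabet of state $i$ excludes $d_1,\ldots,d_i$, a premature guess of the separator $d_i$ (followed by a later $d_i$) leaves the transducer stuck, so only the guess matching the unique factorization of $w$ yields an accepting run. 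Thus the realized relation is a partial function with domain $W_\sigma$ and the prescribed values, and summing these finitely many disjoint-domain functions gives the rational observer $\Obs$, completing the argument.
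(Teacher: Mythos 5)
Your proposal is correct and follows the same route as the paper's proof: an Orwellian observer $\Obs_{SD} = \bigcupdot_{\sigma \in \Sigma(D)} \Obs_\sigma$ with one view per repetition-free sequence $\sigma$, the characterization of INISD as the single inclusion $\Obs_{SD}(L) \subseteq L$ (hence an RIFP), and the remark that idempotence lets Proposition~\ref{prop:incl-op} recast this inclusion as opacity. Two points where you go beyond the paper are worth flagging. First, the paper asserts without argument that $L$ satisfies INISD iff $\Obs_\sigma(L \cap W_\sigma) \subseteq L \cap W_\sigma$ for every $\sigma$; your observation that the existential witness $w'$ is \emph{forced} to equal $\Obs_\sigma(w)$, because $\pi_{\sigma,i}$ is the identity on $V_{\sigma,i}^*$ and $V_{\sigma,i} \subseteq A_\sigma \setminus \{d_1,\ldots,d_i\}$, is exactly the missing justification for that equivalence. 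Second, your transducer is \emph{not} the one in Fig.~\ref{fig:Gsigma}: there, every state $q_i$ loops on all of $A_\sigma$ (copying $V_{\sigma,i}$, erasing the rest), so read literally the figure does not realize a function --- for $\sigma = d_1$ the input $d_1 d_1$ has two accepting runs, with outputs $d_1 d_1$ and $d_1$ --- and the domains of distinct views overlap (e.g.\ $d_1 d_2 d_1$ is accepted by both $\Obs_{d_1 d_2}$ and $\Obs_{d_2 d_1}$), contradicting the partition requirement of Def.~\ref{def:OObs}. Your restriction of the loop alphabet at state $i$ to $A_\sigma \setminus \{d_1, \ldots, d_i\}$ repairs both defects at once: a premature guess of the separator $d_i$ leaves the run stuck, so each view is a genuine partial function with domain exactly $W_\sigma$, and the disjointness of the view domains becomes real rather than asserted. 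In short, your proof is essentially the paper's, but written with the care needed to make its central construction actually work.
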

\begin{proof}
  We build an (idempotent) Orwellian observer $\Obs_{SD}$ such that a
  language $L$ satisfies INISD if and only if $\Obs_{SD}(L) \subseteq
  L$. Let $\Obs_{SD} = \bigcupdot_{\sigma \in \Sigma(D)} \Obs_\sigma$,
  where the view $\Obs_\eps$ and a generic view $\Obs_\sigma$ for some
  non empty $\sigma = d_1 \ldots d_n \in \Sigma(D)$ are depicted in
  Fig.~\ref{fig:Gsigma}.\qed
\end{proof}

\begin{figure}[ht]
  \begin{center}
\unitlength=0.8mm
\gasset{loopdiam=6,Nh=8,Nw=8,Nmr=4}
    \begin{picture}(140,32)(-18,-3)

 \node[linecolor=blue!70,fillcolor=blue!20](B0)(-20,4){$p_0$}
 \nodelabel[ExtNL=y,NLdist=8,AHnb=0,ilength=-2,NLangle=180](B0){
$\Obs_\epsilon :$} 
    \imark(B0)  \fmark[fangle=270](B0)
  \drawloop(B0){{\small$\begin{array}{l} v|v, \ v \in V \\ 
c|\eps, \ c \in C \end{array}$}}
    
\node[linecolor=blue!70,fillcolor=blue!20](A0)(20,4){$q_0$}
\nodelabel[ExtNL=y,NLdist=8,AHnb=0,ilength=-2,NLangle=180](A0){$\Obs_\sigma :$} 
\imark(A0)
 \node[linecolor=blue!70,fillcolor=blue!20](A1)(60,4){$q_1$}
                   
\drawedge(A0,A1){$d_1$}
 \node[Nframe=n,Nw=20,Nh=20,Nmr=10](A2)(95,4){$\cdots$}
\node[linecolor=blue!70,fillcolor=blue!20](A3)(130,4){$q_n$}
\fmark[fangle=270](A3)

\drawloop(A0){{\small $\begin{array}{l} v|v, \ v \in V_{\sigma,0} \\ 
c|\eps, \ c \in A_\sigma \setminus V_{\sigma,0}  \end{array}$}}
\drawloop(A1){{\small $\begin{array}{l} v|v, \ v \in V_{\sigma,1} \\ 
c|\eps, \ c \in A_\sigma \setminus V_{\sigma,1}  \end{array}$}}
 \drawloop(A3){{\small $\begin{array}{l} v|v, \ v \in V_{\sigma,n} \\ 
c|\eps, \ c \in A_\sigma \setminus V_{\sigma,n}  \end{array}$}}
    \drawedge(A1,A2){$d_2$}
    \drawedge(A2,A3){$d_n$}
    \end{picture}
  \end{center}
  \caption{Views of the observation $\Obs_{SD}$}
\label{fig:Gsigma}
\end{figure}

Let $w = w_0 d_1 w_1 \ldots d_n w_n$ be a word in $L \cap W_\sigma$,
the observation of $w$ is 
$$\Obs_\sigma(w) = \pi_{\sigma,0}(w_0) d_1
\pi_{\sigma,1}(w_1) \ldots d_n \pi_{\sigma,1}(w_n).$$ Then $L$
satisfies INISD if and only if $\Obs_{\sigma}(L\cap W_\sigma)
\subseteq L \cap W_\sigma$ for any $\sigma \in \Sigma(D)$.  Since the
family $\{W_\sigma, \ \sigma \in \Sigma \}$ is a partition of $A^*$,
the family $\{L\cap W_\sigma, \ \sigma \in \Sigma \}$ is a partition
of $L$ and the result follows. Each view $\Obs_\sigma$ is idempotent
and the partitionning also ensures that $\Obs_{SD}$ itself is
idempotent. As a consequence, proposition~\ref{prop:incl-op} applies
here.

\begin{remark}
  Also note that a secret $\varphi$ is opaque for a language $L$
  w.r.t. $\Obs_{SD}$ if and only if for all $\sigma \in \Sigma(D)$,
  $\varphi \cap W_\sigma$ is opaque for $L \cap W_\sigma$
  w.r.t. $\Obs_\sigma$. Indeed, the result again holds because the
  family $\{L\cap W_\sigma, \ \sigma \in \Sigma \}$ is a partition of
  $L$: for all $\sigma \in \Sigma(D)$, $\Obs_\sigma(W_\sigma)
  \subseteq W_\sigma$, we have that $\varphi$ is opaque for $L$
  w.r.t. $\Obs_{SD}$ if and only if for all $\sigma \in \Sigma(D)$,
$$\Obs_{\sigma}(\varphi \cap W_\sigma) \subseteq 
\Obs_{\sigma}((L \setminus \varphi) \cap W_\sigma) 
=  \Obs_{\sigma}((L \cap W_\sigma) \setminus (\varphi \cap W_\sigma)).$$
\end{remark}  

Like before, for regular languages, decidability of INISD as well as
opacity under $\Obs_{SD}$, are consequences of
corollary~\ref{cor:regularcase} and proposition~\ref{prop:INISD}
above. This property is studied in~\cite{best-darondeau2012} for the
prefix languages of (unbounded) labelled Petri nets. This family is
closed under intersection, inverse morphisms and alphabetical
morphisms, hence it is also closed under rational transductions (by
Nivat's theorem~\cite{berstel79}), but it has an undecidable inclusion
problem. A very nice proof is given in~\cite{best-darondeau2012} for
the decidability of the INISD property: it relies on the decidability
of the inclusion problem for the particular case of free nets (where
all transitions have distinct labels, different from $\eps$).

\begin{figure}
  \begin{center}
    \unitlength=3pt
    \begin{picture}(76, 60)(-10, -30)
    \gasset{Nw=5,Nh=5,Nmr=2.5,linewidth=0.20,curvedepth=0}
         \node[linecolor=blue!70,fillcolor=blue!20](B0)(0,0){$p_{i3}$}
                    \node[linecolor=blue!70,fillcolor=blue!20](B1)(0,15){$p_{i2}$}
                       \imark(B1)
                                      \nodelabel[ExtNL=y,NLdist=6,AHnb=0,ilength=-2,NLangle= 180](B1)
                                      {$\mbox{Goat}(i):$}
                                        \node[linecolor=blue!70,fillcolor=blue!20](B3)(15,15){$p_{i4}$}
                        \node[linecolor=blue!70,fillcolor=blue!20](B2)(0,30){$p_{i1}$}
                        
       \drawedge(B0,B3){$\overline{d_3}$}    
             \drawedge(B2,B3){$\overline{d_1}$}     
                          \drawedge[ELside=r](B2,B1){$l_2$}     
                   \drawedge(B1,B3){$\overline{d_2}$} 
                                       \drawedge[ELside=r](B1,B0){$l_3$}  
                                       \gasset{curvedepth=22}
                                               \drawedge(B0,B2){$l_1$}  
    \node[linecolor=blue!70,fillcolor=blue!20](A1)(42,30){$q_{j1}$}
    \node[linecolor=blue!70,fillcolor=blue!20](A0)(55,15){$q_{j0}$}
        \imark(A0)
                          \nodelabel[ExtNL=y,NLdist=6,AHnb=0,ilength=-2,NLangle=  180](A0){$\mbox{Raptor}(j):$}
        \node[linecolor=blue!70,fillcolor=blue!20](A3)(55,0){$q_{j3}$}
        \node[linecolor=blue!70,fillcolor=blue!20](A2)(68,30){$q_{j2}$}
        
        \gasset{curvedepth=3}
        \drawedge(A0,A1){$h_1$}
        \drawedge(A1,A0){$d_1$}
                \drawedge(A0,A2){$h_2$}
        \drawedge(A2,A0){$d_2$}
                        \drawedge(A0,A3){$h_3$}
        \drawedge(A3,A0){$d_3$}
        
          \node[linecolor=blue!70,fillcolor=blue!20](C0)(-20,-15){$r_{10}$}
                    \node[linecolor=blue!70,fillcolor=blue!20](C1)(-20,-30){$r_{11}$}
                    \imark(C1)
                                            \nodelabel[ExtNL=y,NLdist=6,AHnb=0,ilength=-2,NLangle= 180](C1){$\mbox{Gate}(1):$} 
                                            
                       \node[linecolor=blue!70,fillcolor=blue!20](C2)(30,-15){$r_{20}$}
                    \node[linecolor=blue!70,fillcolor=blue!20](C3)(30,-30){$r_{21}$}
                    \imark(C2)
                                            \nodelabel[ExtNL=y,NLdist=6,AHnb=0,ilength=-2,NLangle= 180](C2){$\mbox{Gate}(2):$} 
                                            
                         \node[linecolor=blue!70,fillcolor=blue!20](C4)(78,-15){$r_{30}$}
                    \node[linecolor=blue!70,fillcolor=blue!20](C5)(78,-30){$r_{31}$}
                                            \nodelabel[ExtNL=y,NLdist=6,AHnb=0,ilength=-2,NLangle= 180](C5){$\mbox{Gate}(3):$} 
                                                  \imark(C5)
                                                  
    \gasset{curvedepth=3}
        \drawedge(C0,C1){$\overline{h_1}$}
        \drawedge(C1,C0){$\overline{h_3}$}
         \drawloop[loopangle=0](C1){$\overline{l_1}, \overline{h_1}$}
                  
     \gasset{curvedepth=3}
        \drawedge(C2,C3){$\overline{h_2}$}
        \drawedge(C3,C2){$\overline{h_1}$}
         \drawloop[loopangle=0](C3){$\overline{l_2}, \overline{h_2}$}

     \gasset{curvedepth=3}
        \drawedge(C4,C5){$\overline{h_3}$}
        \drawedge(C5,C4){$\overline{h_2}$}
         \drawloop[loopangle=0](C5){$\overline{l_3}, \overline{h_3}$}

    \end{picture}
  \end{center}
    \caption{The dining Raptors}
\label{fig:DR}
\end{figure}
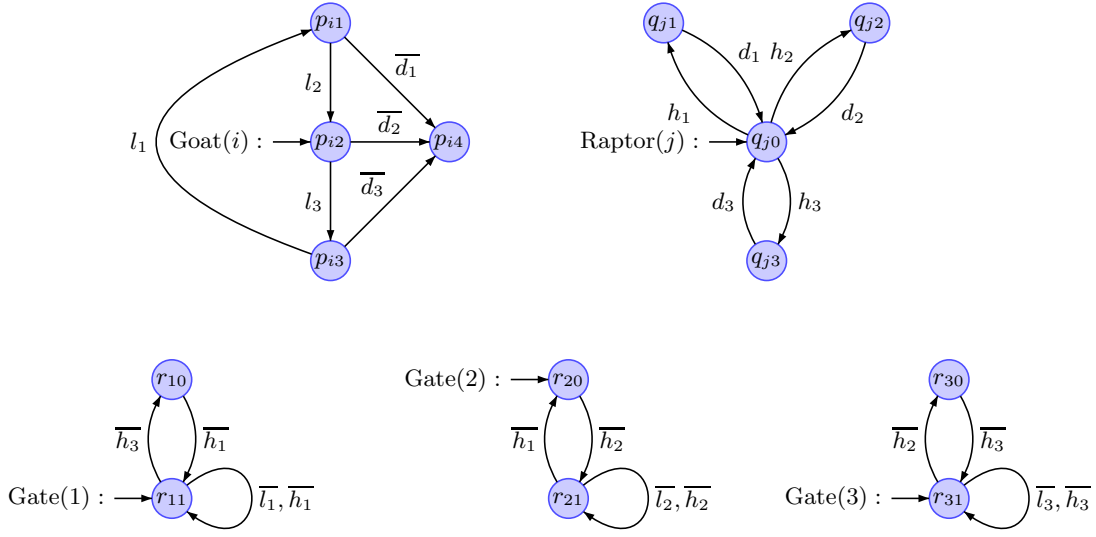

\smallskip
The following example (inspired from~\cite{best-darondeau2012}) tries
to explain the essence of selective declassification.
\begin{example}[The Dining Raptors]
  A circuit followed by a herd of goats is divided in three
  sections. Each section is guarded by a gate. When gate $i$ is open,
  goats can move clockwise from section $i$ to section $i+1
  \pmod{3}$. The center of the circuit is occupied by a den of
  raptors. When gate $i+1 \pmod{3}$ is open, a raptor can leave the
  den and hide around gate $i$ after opening it and closing gate $i+1
  \pmod{3}$ to increase chance of success. When a raptor is embushed
  near a section and there is a goat in this section, the raptor can
  catch prey and come back to the den.

This scenario is modelled with the transition system
\begin{eqnarray*}
\mbox{DR}(n,m) & = &  \prod_{i=1}^n \mbox{Goat}(i) \times  \prod_{j=1}^m\mbox{Raptor}(j) \times \prod_{k=1}^3 \mbox{Gate}(k) 
 \end{eqnarray*}
 obtained by synchronizing the components depicted
 in~Figure~\ref{fig:DR} on the complementary actions. Goats' move from
 gate $i$ to gate $i+1 \pmod{3}$ is modelled with visible action
 $l_1$, raptors' embush action at section $i$, with the confidential
 action $h_i$ and the raptors' catch action in section $i$, by the
 declassification action $d_i$.  Opacity of $\varphi_{DR}$
 w.r.t. ${\cal O}_{SD}$ in $L(DR(m, n))$ where
    \begin{eqnarray*}
 \varphi_{DR} & = & \{u \in L(DR(m, n)) \mid  {\cal O}_{SD}(u) \not = u\}
 \end{eqnarray*}
 comes down to absence of information the goats can get from
 environment about the moment they will be caught until this
 happens. Hence there is no strategy that they can oppose to the
 raptors. In the case where initially goats are in section $2$ and
 gates $1$ and $3$ are opened, as shown in~Figure~\ref{fig:DR},
 $L(\mbox{DR}(n,m))$ is not opaque w.r.t. $ \varphi_{DR}$ since
 $l_3l_1h_2l_2$ reveals the secret ($h_2l_3l_1l_2$, $l_3h_2l_1l_2$ and
 $l_3l_1h_2l_2$ are the only traces observed as $l_3l_1l_2$) and this,
 for any number of raptors and goats. This example may be of course
 modified in various ways as follows. If all three gates are open,
 goat $1$ never realizes she dies since $l_3l_1h_1d_1$ does not
 reveals the secret but following this, as gate $2$ is now close, goat
 $2$ after $l_3l_1l_2$ will know that a raptor is embushed at gate $2$
 since $l_3l_1h_1d_1l_3l_1h_2l_2$ reveals the secret. If only gate $3$
 is open, $l_3h_2d_2h_1l_1$ reveals to the herd, that one of them is
 now trapped in section $2$. Finally, if we dismantle all three gates,
 the only synchronizing actions are now the declassification ones and
 $\varphi_{DR}$ becomes opaque w.r.t. ${\cal O}_{SD}$.
%
\end{example}

\section{RIF properties with full rational relations} \label{sec:ex}
In this section, we first revisit Basic Security Predicates (BSP)
presented in~\cite{Mantel00,Mantel01} and used as building blocks of
the Mantel's generic security model. In the second part, we
investigate anonymity properties.

\subsection{Basic Security Predicates} \label{sec:bsp}

For BSPs, the alphabet $A$ is partitioned into $A = V \cupdot C
\cupdot N$, where $V$ is the set of visible events, $C$ is the set of
confidential events and $N$ is a set of internal events. Informally, a
BSP for a given language $L$ over $A$, is an implication stating that
for any word $w$ in $L$ satisfying some restriction condition, there
exists a word $w'$ also in $L$ which is observationnally equivalent to
$w$ and which fulfills some closure condition describing the way $w'$
is obtained from $w$ by adding or removing some confidential events.
The conditions are sometimes parametrized by an additional set $X
\subseteq A$ of so-called admissible events. We prove:
\begin{proposition}\label{prop:BSP}
  Any BSP over languages in some family $\La$ belongs to $RIF(\La)$.
\end{proposition}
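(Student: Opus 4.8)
The plan is to recast every BSP, which is a universally--existentially quantified implication over $L$, as a single inclusion $L_1 \subseteq L_2$ with $L_1,L_2$ generated from $L$ by the RIFP grammar. The guiding observation is the one that already turned opacity into an inclusion: a statement of the shape ``for every $w\in L$ meeting a \emph{restriction} there is $w'\in L$ meeting a \emph{closure} and observationally equivalent to $w$'' holds exactly when the set of observations extracted from the restricted words is contained in the set extracted from the closed words. So for each BSP I would design two rational observers $\Obs_1,\Obs_2$ producing a common canonical observation of a word, in such a way that the BSP is equivalent to $\Obs_1(L)\subseteq\Obs_2(L)$.

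Concretely, Mantel's catalogue of BSPs is finite, and each member shares the same skeleton: a word is split as $w=\beta\,c\,\alpha$ with $c\in C$ (or $w=\beta\,\alpha$) at a point selected by a \emph{regular} condition, the restriction and closure are regular predicates on the relevant region, and the notion of observational equivalence to be preserved is either the verbatim prefix $\beta$ (for the ``backwards strict'' variants) or merely its visible projection $\pi_V(\beta)$. I would realise $\Obs_1$ by a transducer that reproduces the preserved region (verbatim, or through the morphism $\pi_V$), emits a fresh marker $\sharp\notin A$ at the split, checks the restriction on the remaining region through its finite control, and outputs $\pi_V$ of that region; $\Obs_2$ is the symmetric transducer built from the closure condition, inserting or deleting the confidential event $c$ as dictated by the BSP. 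Both are finite transducers, hence rational observers, and since $\sharp$ occurs exactly once the factorisation of an output word is unambiguous, which is precisely what forces the two sides to share the same $\beta$ (resp. $\pi_V(\beta)$) and the same $\pi_V$ of the suffix; unfolding then gives $\Obs_1(L)\subseteq\Obs_2(L)$ iff the BSP holds. The simplest member, removal $R$, already fits and needs no marker: it is just $\pi_V(L)\subseteq\pi_V(L\cap(V\cup N)^\ast)$, using that intersection with the regular set $(V\cup N)^\ast$ is the observer $\Obs_{(V\cup N)^\ast}$ of Corollary~\ref{cor:regularcase}.

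The step I expect to be the real obstacle is the family of \emph{admissibility}-parametrised predicates (BSIA, FCIA, and their kin), whose restriction carries an extra clause ``$c$ is $X$-admissible after $\beta$'', i.e.\ $\exists\gamma$ with $\gamma c\in L$ and $\pi_X(\gamma)=\pi_X(\beta)$. This is a second existential quantifier ranging over $L$, so it is \emph{not} a transduction of the single word $w$ and cannot be hidden in the finite control of $\Obs_1$ once $\La\neq\mathcal{R}eg$. The plan here is to extract the admissible prefixes from $L$ by rational operations: the set $B_c=\{\beta\mid \pi_X(\beta)\in\pi_X(\{\gamma\mid \gamma c\in L\})\}$ is obtained from $L$ by intersection with $A^\ast c$, deletion of the trailing $c$, the projection $\pi_X$ and the inverse projection $\pi_X^{-1}$ --- all rational transductions --- so $B_c$ is itself a RIFP language. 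Since $\beta$ is recoverable from the output of $\Obs_1$ (it is the factor before $\sharp$), imposing admissibility amounts to intersecting the observation language with the RIFP language $B_c\,\sharp\,V^\ast$, itself a rational transduction of $B_c$. This is exactly where the intersection operator of the grammar --- which, as the paper stresses, does not reduce to a plain inclusion of two observers --- becomes indispensable.

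Finally I would invoke the closure hypotheses on $\La$ (union, intersection, rational transductions) to certify that every language built above indeed lies in the grammar $L_1,L_2::=L\mid\Obs(L_1)\mid L_1\cup L_2\mid L_1\cap L_2$, so that each resulting inclusion is a genuine member of $RIF(\La)$; combined with the per-predicate equivalences this proves the proposition. The only care beyond bookkeeping is to verify, predicate by predicate, that the marker-based encoding of observational equivalence matches Mantel's exact definition --- verbatim prefix versus visible projection, and which of $C$, $N$ are allowed to vary --- which is routine once the template is fixed.
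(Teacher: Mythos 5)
Your proposal is correct, and on the core predicates it takes a genuinely different route from the paper's. The paper encodes most BSPs as a \emph{one-sided} inclusion $\Obs(L)\subseteq L$ (or, for the non-strict variants, $\Obs(L)\subseteq \pi_{\overline{N}}(L)$): a single observer performs the deletion/insertion together with the allowed perturbation of $N$-events, the right-hand side stays $L$ itself, and no marker ever appears; only for the admissibility predicates does the paper bring in a second observer and the intersection operator, via $\Obs_c^X(u)=\pi_X^{-1}(\pi_X(c^{-1}u))\cdot c\cdot(V\cupdot N)^*$ and $\bigcup_{c\in C}\bigl(l\mbox{-}ins_c(L)\cap\Obs_c^X(L)\bigr)\subseteq L$, which is essentially your $B_c$ construction, so there the two proofs coincide. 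Your two-sided, marker-based encoding $\Obs_1(L)\subseteq\Obs_2(L)$ differs exactly on the ``backwards'' and ``forward correctable'' predicates, and there it buys actual correctness: those predicates have an \emph{existential} closure (the prefix is kept verbatim, only the suffix may be $N$-perturbed), and the paper's one-sided version is too strong. For instance the paper claims $L$ satisfies $BSD$ iff $\Obs_{BSD}(L)\subseteq L$, where $\Obs_{BSD}$ maps $w_1cw_2$ to \emph{all} $w_1w'_2$ with $\pi_{\overline{N}}(w'_2)=\pi_{\overline{N}}(w_2)$; but with $V=\{v\}$, $C=\{c\}$, $N=\{n\}$ and $L=\{cv,\,v,\,vn\}$, the property $BSD$ holds while $nv\in\Obs_{BSD}(L)\setminus L$, so the stated equivalence fails in one direction. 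Your marker $\sharp$, which pins the same prefix $\beta$ on both sides of the inclusion while projecting the suffix, is precisely what lets the inclusion quantify existentially over witnesses in $L$; it is in substance the mark/unmark device of~\cite{DSouzaHRS11} that the paper dismisses as unnecessary in its setting. Two pieces of bookkeeping remain on your side: for the non-strict variants ($D$, $I$, \dots) the preserved prefix observation must be $\pi_{\overline{N}}(\beta)$ rather than $\pi_V(\beta)$, since the prefix may contain other confidential events that Mantel's closure keeps; and for $BSIA^X$/$FCIA^X$ the finitely many per-$c$ inclusions must be merged into a single one (e.g.\ by emitting the inserted $c$ next to the marker), because an RIFP is a single inclusion --- both routine once, as you say, the template is fixed.
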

\begin{proof}
The proof mainly consists in exhibiting rational observers together
with an inclusion relation such that a language $L$ satisfies a given
BSP if and only if this relation holds.  We give the general idea with
several examples.  For any $B \subseteq A$, we write $\overline{B}= A
\setminus B$. First observe that, starting from some inclusion
relation $\Obs_1(L) \subseteq \Obs_2(L)$ for rational observers
$\Obs_1$ and $\Obs_2$, ignoring events from $N$ reduces to composing
both sides with $\pi_{\overline{N}}$. This is simply done by adding
loops labeled by $n |\eps$, for all $n\in N$, on all states of the
tranducers realizing $\Obs_1$ and $\Obs_2$ over $V \cupdot C$. This
operation corresponds to variants of the properties.

\begin{enumerate}
\item The simplest predicate called \emph{Strict Removal of events
    (SR)} corresponds to a projection: $L$ satisfies $SR$ if
  $\pi_{\overline{C}}(L) \subseteq L$. The non strict variant \emph{(R)}
  where events of $N$ are ignored corresponds to $\pi_{V}(L) \subseteq
  \pi_{\overline{N}}(L)$, since the composition of
  $\pi_{\overline{C}}$ and $\pi_{\overline{N}}$ produces $\pi_{V}$.
\item We now turn to predicates with stepwise deletion of events. A
  language $L$ satisfies $SD$ (\emph{Strict Deletion of events}) if
  for any $w=w_1 c w_2 \in L$, with $c\in C$ and $\pi_C(w_2)=\eps$,
  then $w_1 w_2 \in L$. As noted in~\cite{DSouzaHRS11}, this property
  is equivalent to $l\mbox{-}del(L) \subseteq L$, where $l\mbox{-}del$
  is the function associating with a word $w$ the word
  $l\mbox{-}del(w)$ obtained from $w$ by deleting the last
  confidential event.  This function is realized by the transducer in
  Figure~\ref{fig:ex2} left.  The observation itself is described
  in~\cite{Mantel00} as a recursive operation: starting from $w= w_0
  c_1 w_1 c_2 \ldots w_{p-1} c_p w_{p}$ with $w_i \in \overline{C}^*$
  for $0\leq i \leq p$, the words obtained by successively removing
  all confidential actions from the right to the left of $w$ must also
  belong to $L$. This corresponds to applying the star operation to
  $l\mbox{-}del$ (for the composition of relations), resulting in
  $\Obs_{del} = \cup_{k \geq 0}l\mbox{-}del ^k$, which is not a
  function. While the star operation does not necessarily preserve
  rational relations~\cite{sakarovitch09}, in this case, $\Obs_{del}$
  is realized by the transducer in Figure~\ref{fig:ex2} right.

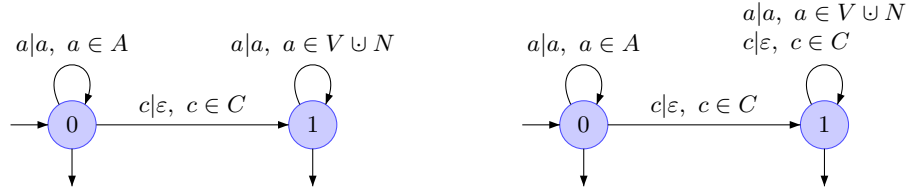
\begin{figure}[htbf]
\unitlength=0.8mm
\gasset{loopdiam=6,Nh=8,Nw=8,Nmr=4}
\begin{picture}(120,25)(-25,1)

\node[linecolor=blue!70,fillcolor=blue!20](q0)(10,8){$0$}
\node[linecolor=blue!70,fillcolor=blue!20](q1)(50,8){$1$}
\imark(q0) \fmark[fangle=270](q0) \fmark[fangle=270](q1)

\drawloop(q0){$a|a, \ a \in A$}

\drawedge(q0,q1){$c|\eps, \ c \in C$} 

\drawloop(q1){$a|a, \ a \in V \cupdot N$}

\put(85,0){
\node[linecolor=blue!70,fillcolor=blue!20](q0)(10,8){$0$}
\node[linecolor=blue!70,fillcolor=blue!20](q1)(50,8){$1$}
\imark(q0) \fmark[fangle=270](q0) \fmark[fangle=270](q1)

\drawloop(q0){$a|a, \ a \in A$}

\drawedge(q0,q1){$c |\eps, \ c \in C$} 
\drawloop(q1){$\begin{array}{l} a|a, \ a \in V \cupdot N \\ 
c|\eps, \ c \in C \end{array}$}

}

\end{picture}
\caption{Transducers for Strict Deletion $l\mbox{-}del$ (left) and
  observation $\Obs_{del}$ (right)}
\label{fig:ex2}
\end{figure}
Again the non strict variant $(D)$ is obtained by composition on both
sides with $\pi_{\overline{N}}$. An other variant, \emph{Backward
  Strict Deletion of confidential events (BSD)}, is defined by: $L$
satisfies $BSD$ if for any $w=w_1 c w_2 \in L$ with $\pi_C(w_2)=\eps$,
there is $w'_2$ such that
$\pi_{\overline{N}}(w_2)=\pi_{\overline{N}}(w'_2)$ and $w_1 w'_2 \in
L$. In this case, only the suffixes $w_2$ and $w'_2$ following the
last confidential event can differ on internal events from $N$. The
corresponding observation relation $\Obs_{BSD}$ is defined by
associating with a word $w=w_1 c w_2$ such that $\pi_C(w_2)=\eps$, all
words obtained from $w$ by removing $c$ and replacing $w_2$ by some
$w'_2$ such that
$\pi_{\overline{N}}(w_2)=\pi_{\overline{N}}(w'_2)$. Then, the rational
observation $\Obs_{BSD}$ realized by the transducer in
Fig.~\ref{fig:ex3} left (which is not a function) is such that $L$
satisfies $BSD$ if and only if $\Obs_{BSD}(L) \subseteq L$.

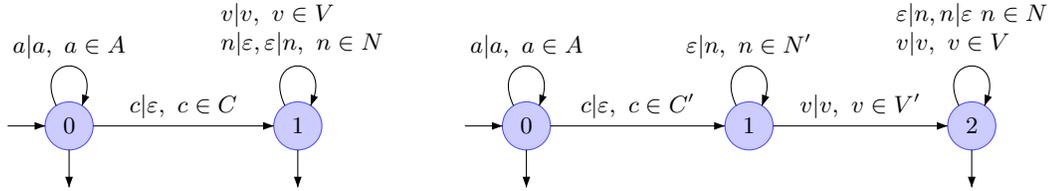
\begin{figure}[htbf]
\unitlength=0.8mm
\gasset{loopdiam=6,Nh=8,Nw=8,Nmr=4}
\begin{center}
\begin{picture}(120,24)(20,5)
\node[linecolor=blue!70,fillcolor=blue!20](q0)(10,8){$0$}
\node[linecolor=blue!70,fillcolor=blue!20](q1)(48,8){$1$}
\imark(q0) \fmark[fangle=270](q0) \fmark[fangle=270](q1)
\drawloop(q0){$a|a, \ a \in A$}

\drawedge(q0,q1){$c|\eps, \ c \in C$} 

\drawloop(q1){
$\begin{array}{l} v|v, \ v \in V \\ n|\eps, \eps |n, \ n \in N \end{array}$}

\put(76,0){
\node[linecolor=blue!70,fillcolor=blue!20](q0)(10,8){$0$}
\node[linecolor=blue!70,fillcolor=blue!20](q1)(47,8){$1$}
\node[linecolor=blue!70,fillcolor=blue!20](q2)(84,8){$2$}
\imark(q0) \fmark[fangle=270](q0) \fmark[fangle=270](q2)

\drawloop(q0){$a|a, \ a \in A$}
\drawedge(q0,q1){$c |\eps, \ c \in C'$} 
\drawloop(q1){$\eps|n, \ n \in N'$}
\drawedge(q1,q2){$v |v, \ v \in V'$} 
\drawloop(q2){$\begin{array}{l} \eps|n, n| \eps\  n \in  N \\ 
v|v, \ v \in V \end{array}$}

}
\end{picture}
\end{center}
\caption{Transducers for BSD (left) and FCD (right)}
\label{fig:ex3}
\end{figure}

A last variant, called \emph{Forward Correctable Deletion (FCD)}
in~\cite{DSouzaHRS11} considers fixed subsets $V' \subseteq V$, $C'
\subseteq C$ and $N' \subseteq N$. A language $L$ satisfies $FCD$ if
for any $w= w_1 c v w_2$ in $L$ with $c \in C'$, $v\in V'$ and
$\pi_C(w_2)=\eps$, there are some $w' \in N'^*$ and $w'_2$ such that
$\pi_{\overline{N}}(w_2)=\pi_{\overline{N}}(w'_2)$ and $w_1 w' v w'_2
\in L$. The corresponding transducer $\Obs_{FCD}$ is depicted in
Fig.~\ref{fig:ex3} right, with $L$ satisfies $FCD$ if and only if
$\Obs_{FCD}(L) \subseteq L$.

\item Finally, the last class concerns stepwise insertion of events. A
  language $L$ satisfies $SI$ (\emph{Strict insertion of events}) if
  for any $w=w_1 w_2 \in L$, with $\pi_C(w_2)=\eps$, and for any $c
  \in C$, we have $w_1 c w_2 \in L$. The corresponding relation
  $l\mbox{-}ins$ (which is also not a function) such that $L$
  satisfies $SI$ iff $l\mbox{-}ins(L) \subseteq L$, is realized by a
  transducer similar to the one in Fig.~\ref{fig:ex2} left, where the
  middle labels $c|\eps, \ c \in C$ are replaced by $\eps|c, \ c \in
  C$. The non strict $(I)$, the backward $(BSI)$ and the correctable
  $(FCI)$ variants are obtained similarly.

  \smallskip The remaining four predicates concern insertion with
  respect to admissible events.  For a given subset $X$ of $A$, a
  language $L$ satisfies \emph{Strict insertion of $X$-admissible
    events ($SIA^X$}) if for any $w=w_1 w_2 \in L$ such that
  $\pi_C(w_2)=\eps$ and there are some $w_3 \in A^*$ and $c\in C$ with
  $w_3c \in L$ and $\pi_X(w_1)=\pi_X(w_3)$, we have $w_1 c w_2 \in
  L$. In this case, recall that the left quotient of language $M'$ by
  language $M$ is defined by $M^{-1}M'= \{w_1 \in A^* \ \mid \ w_1 w_2
  \in M' \mbox{ for some } w_2 \in M\}$. For a fixed regular language
  $M$, the left quotient by $M$ and the concatenation by $M$ are
  rational relations~\cite{berstel79}. For each $c \in C$, we consider
  the two following rational relations:
\begin{itemize}
\item $l\mbox{-}ins_c$ is the variant of $l\mbox{-}ins$ where the
  single fixed letter $c$ is inserted,
\item $\Obs_c^X$ is defined by $\Obs_c^X(u)=\pi_X^{-1}
  (\pi_X(c^{-1}u)).c.(V \cupdot N)^*$ for $u \in A^*$.
\end{itemize}
Then $L$ satisfies  $SIA^X$  if and only 
$\bigcup_{c \in C} (l\mbox{-}ins_c( L) \cap \Obs_c^X(L)) \subseteq L$.

Similar relations hold for the variants $(IA^X)$, $(BSIA^X)$ and
$(FCIA^X)$, hence these four cases are slightly different from the
previous ones. 
\end{enumerate} 
\qed \end{proof}

\medskip In~\cite{DSouzaHRS11}, the decidability results for all 14
BSPs on regular languages are obtained by ad-hoc proofs establishing
that regularity is preserved by the various $op_1$, $op_2$
operations. These include auxiliary functions on languages (like
$mark$, $unmark$, etc.) that are unnecessary in our setting. Actually,
we show how decidability of BSPs is an immediate consequence of
corollary~\ref{cor:regularcase} and proposition~\ref{prop:BSP} above.
The more difficult case of pushdown systems (generating prefix-closed
context-free languages) is also investigated in~\cite{DSouzaHRS11}:
Although context-free languages are closed under rational
transductions, they are not closed under intersection and the
inclusion problem is undecidable for context-free
languages~\cite{berstel79}. Finally, several undecidability results
are presented in~\cite{DSouzaHRS11}. In particular, they exhibit an
information flow property called Weak Non Inference (WNI) shown to be
undecidable even for regular languages. Hence, WNI cannot be expressed
neither as a conjunction of BSPs, and as matter of fact, neither as an
RIFP. Also, in order to get decidable cases, authors had to restrict
the languages and/or the class of properties like reducing the size of
the alphabet ($card(V) \leq 1$ and $card(C) \leq 1$).

\subsection{Conditional anonymity}\label{sec:CA}

Conditional or escrowed anonymity is concerned with the revocation of
the guarantee, under well-defined conditions, to which an agent
agrees, that his identification w.r.t. a particular (non-secret)
action will remain secret and in such case, conditional anonymity
guarantees the unlinkability of revoked users in order to guarantee
anonymity to ``legitimate'' agents~\cite{danezis2008}. As suggested
in~\cite{Mazare2008}, Orwellian observation can be used to model {\em
  conditional anonymity} but~\cite{Mazare2008} contains neither a
definition of such a property, nor any investigation of its
decidability. We close the gap in this paper.
 
The alphabet is partitioned into $A= V \cupdot P \cupdot R$ where $P$
is the set of actions performed by anonymous participants, $V$ is the
set of visible actions and $R$ is the set of anonymity revocation
actions, such that for each participant corresponds a dedicated
revocation action $r$ allowing to reveal the subset $P(r)$ of all its
anonymous actions. Hence the sets $P(r)$ are mutually disjoint.

In~\cite{Schneider96}, definitions of {\em weak and strong anonymity}
are given in the setting of the process algebra CSP.  A language is
\emph{strongly anonymous (SA)} if it is stable under any
``perturbation'' of anonymous actions where an anonymous action in $P$
can be replaced by any other element of $P$.  It is \emph{weakly
  anonymous (WA)} if it is stable under any permutation on the set of
anonymous actions. For a finite set $Z$, we denote by $S_Z$ the set of
all permutations on $Z$.  We first have:
\begin{proposition} \label{prop:AinRIF}
Weak and strong anonymity on languages in $\La$ belong to $RIF(\La)$.
\end{proposition}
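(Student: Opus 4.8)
The plan is to express each property as a single inclusion $\Obs(L)\subseteq L$ for a suitable rational observer $\Obs$; any relation of this shape is immediately an RIFP, since it is obtained from the grammar by taking $L_1=\Obs(L)$ and $L_2=L$. So it suffices to exhibit, for each of the two properties, a rational observer that realizes the relevant ``perturbation'' and to check that stability of $L$ under that perturbation coincides with the inclusion.

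For strong anonymity I would take $\Obs_{SA}$ to be the letter-to-letter substitution that fixes every visible and revocation action and sends each $p\in P$ to an arbitrary element of $P$; that is, the rational relation realized by the one-state transducer with loops $v|v$ for $v\in V$, $r|r$ for $r\in R$, and $p|p'$ for all $p,p'\in P$. Having a single (initial and final) state, this transducer realizes a rational relation, and $\Obs_{SA}(L)$ is exactly the set of all perturbations of words of $L$. Stability of $L$ under perturbation is then precisely $\Obs_{SA}(L)\subseteq L$; note that, because the identity substitution $p\mapsto p$ is among the perturbations, one always has $L\subseteq\Obs_{SA}(L)$, so this one-sided inclusion is in fact equivalent to equality.

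For weak anonymity the situation looks harder, since a permutation must be applied consistently to all occurrences of a letter in a word, which at first sight seems to require the transducer to remember its choice. The key observation is that $P$ is finite, hence so is $S_P$. For each $\theta\in S_P$, extend $\theta$ to a morphism on $A^*$ by letting it fix $V\cupdot R$; this morphism is a rational function, realized by a one-state transducer $\T_\theta$ with loops $a|\theta(a)$ for $a\in A$. I then set $\Obs_{WA}=\bigcup_{\theta\in S_P}\theta$. Being a finite union of rational relations, $\Obs_{WA}$ is rational (realized, for instance, by the disjoint union of the transducers $\T_\theta$, each of whose single state is both initial and final, so that every accepting run stays inside one component and outputs $\theta(w)$). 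Then $\Obs_{WA}(L)=\bigcup_{\theta\in S_P}\theta(L)$, so $\Obs_{WA}(L)\subseteq L$ holds iff $\theta(L)\subseteq L$ for every $\theta$, which is exactly weak anonymity; and again equality is automatic since $\theta=\id$ belongs to $S_P$.

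Both properties are therefore instances of $\Obs(L)\subseteq L$ for a rational observer $\Obs$, and hence lie in $RIF(\La)$. The conceptual point distinguishing the two observers, which I expect to be the real content of the argument, is that $\Obs_{SA}$ substitutes each occurrence independently (a genuinely nondeterministic relation achievable statelessly), whereas weak anonymity demands one global bijection per word; the apparent need for memory is dissolved by trading the global consistency constraint for a finite choice of component, which is possible only because $S_P$ is finite. The remaining verifications are routine: that the displayed transducers realize the claimed relations, and that, thanks to the identity permutation and the identity substitution lying in the respective families, the stability (closure) requirement coincides with the single inclusion used in the RIFP grammar.
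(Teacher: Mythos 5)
Your proposal is correct and follows essentially the same route as the paper: strong anonymity via the rational substitution sending each $p\in P$ to any element of $P$ (which you realize explicitly as a one-state transducer), and weak anonymity via the finite disjoint union $\bigcupdot_{\alpha\in S_P}\Obs_\alpha$ of permutation morphisms, with each property stated as the single inclusion $\Obs(L)\subseteq L$. The only cosmetic difference is that you carry the revocation alphabet $R$ along, whereas the paper simply takes $R=\emptyset$ for these two properties.
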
 
\noindent \emph{Proof.} 
  For these two properties, the subalphabet $R$ of revocation actions
  is empty. We express the properties in our language-based setting,
  similarly as in~\cite{Mazare2008}.

  A language $L$ is strongly anonymous w.r.t.\ $P$ if 
  $\Obs_{SA}^P(L) \subseteq L$ where $\Obs_{SA}^P$ is the mapping
  defined on $A = V \cupdot P$ by: $\Obs_{SA}^P(a) = P$ if $a \in P$
  and $\Obs_{SA}^P(a) = \{a\}$ otherwise. Such mappings (called
  rational substitutions in~\cite{berstel79}) are well known to be
  rational relations, hence the result follows.

  \smallskip A language $L$ is weakly anonymous w.r.t.\ $P$ if
  $\Obs_{WA}^P(L) \subseteq L$ where $\Obs_{WA}^P = \bigcupdot_{\alpha
    \in S_{P}} \Obs_\alpha$ and $\Obs_\alpha$ is the morphism which
  applies the permutation $\alpha$ on letters of $P$: 

  $\Obs_\alpha(a) = \alpha(a)$ if $a \in P$ and $\Obs_\alpha(a) = a$
  otherwise.

With any $\sigma \subseteq R$, we associate:
\begin{itemize}
\item $W_\sigma= \{w \in A^* \mid \ \pi_R (w) \in \sigma^* \}$, the
  set of words $w$ in $A^\ast$ where the set of revocation actions
  appearing in $w$ is $\sigma$,
\item $P_\sigma = P \setminus \bigcupdot_{r \in \sigma}P(r)$, the set
  of actions of legitimate agents.
\end{itemize}

We denote by $2^R$ the powerset of $R$ and remark that here also, the
sets $W_\sigma$ for $\sigma \in 2^R$ form a partition of $A^*$. In
order to provide at any moment strong (weak) anonymization to
legitimate agents, we define conditional anonymity as follows:

\begin{definition}
  With the notations above, a language $L$ on $V \cupdot P \cupdot R$
  is:
\begin{itemize}
\item {\em conditionally weakly anonymous} (CWA) if for any $\sigma
  \subseteq R$, $\ L \cap W_\sigma$ is WA w.r.t.\ $P_\sigma$,
\item {\em conditionally strongly anonymous} (CSA) if for any $\sigma
  \subseteq R$, $\ L \cap W_\sigma$ is SA w.r.t.\ $P_\sigma$.
\end{itemize}
\end{definition}

Now we have:
\begin{proposition}\label{prop:CAinRIF} 
  Weak and strong conditional anonymity on languages in $\La$ belong
  to $RIF(\La)$.
\end{proposition}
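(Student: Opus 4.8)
The plan is to reduce weak and strong conditional anonymity to the inclusion of images under suitable rational observers, exactly as was done for INISD in Proposition~\ref{prop:INISD}, and then invoke closure of $RIF(\La)$ under the relevant operations. The key structural fact already in hand is that the sets $\{W_\sigma, \ \sigma \in 2^R\}$ form a partition of $A^*$; consequently $\{L \cap W_\sigma, \ \sigma \in 2^R\}$ forms a partition of $L$. First I would define, for each $\sigma \subseteq R$, the observer $\Obs_{WA}^{P_\sigma}$ (respectively $\Obs_{SA}^{P_\sigma}$) exactly as in Proposition~\ref{prop:AinRIF}, but using the legitimate-agent alphabet $P_\sigma$ in place of $P$. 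Since $P_\sigma \subseteq P$ is a fixed finite set, each such observer is again a rational substitution (for SA) or a finite disjoint union of permutation morphisms indexed by $S_{P_\sigma}$ (for WA), hence a rational relation in both cases.

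Next I would assemble the global observer as a disjoint union over the partition: $\Obs_{CWA} = \bigcupdot_{\sigma \subseteq R}(\id_{W_\sigma} \circ \Obs_{WA}^{P_\sigma})$, and analogously $\Obs_{CSA}$ for the strong case. Because $R$ is finite, the index set $2^R$ is finite, so this is a finite disjoint union of rational relations and therefore rational. One must check that each component respects the partition, i.e. that applying a permutation or substitution internal to $P_\sigma$ to a word of $W_\sigma$ keeps it inside $W_\sigma$; this holds because these operations only touch letters of $P_\sigma \subseteq P$ and leave the revocation letters of $R$ (which determine $\sigma$ via $\pi_R(w) \in \sigma^*$) untouched. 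With this in place, the definition of CWA — that $L \cap W_\sigma$ is WA w.r.t. $P_\sigma$ for every $\sigma$ — unfolds, using Proposition~\ref{prop:AinRIF}, into $\Obs_{WA}^{P_\sigma}(L \cap W_\sigma) \subseteq L \cap W_\sigma$ for each $\sigma$. Summing over the partition of $L$, these individual inclusions hold simultaneously if and only if $\Obs_{CWA}(L) \subseteq L$, which is manifestly an RIFP. The CSA case is identical with $\Obs_{SA}^{P_\sigma}$ replacing $\Obs_{WA}^{P_\sigma}$.

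The main obstacle is not the rationality of the observers, which follows from the same arguments as Proposition~\ref{prop:AinRIF} together with finiteness of $2^R$, but rather the careful verification that the global observer is well-defined as a \emph{function-free} disjoint union respecting the partition, and that the equivalence between the pointwise-per-$\sigma$ condition and the single global inclusion $\Obs_{CWA}(L) \subseteq L$ genuinely goes through. The argument mirrors the partition reasoning already used for $\Obs_{SD}$: since $\{L \cap W_\sigma\}$ partitions $L$ and each component of the observer maps $W_\sigma$ into $W_\sigma$, the global inclusion decomposes as a conjunction of the per-block inclusions with no cross-contamination between blocks. I expect the whole proof to be short, essentially a transfer of the INISD template to the anonymity setting, so I would state it by constructing $\Obs_{CWA}$ and $\Obs_{CSA}$ explicitly and then citing Proposition~\ref{prop:AinRIF} and the partition argument to conclude $\Obs_{CWA}(L) \subseteq L$ and $\Obs_{CSA}(L) \subseteq L$ are the desired RIFPs.
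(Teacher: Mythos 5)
Your proof is correct and takes essentially the same approach as the paper: the paper likewise builds $\Obs_{CSA} = \bigcupdot_{\sigma \in 2^{R}} \Obs_{SA}^{P_\sigma}$ and $\Obs_{CWA} = \bigcupdot_{\sigma \in 2^{R}} \Obs_{WA}^{P_\sigma}$ as finite disjoint unions of per-$\sigma$ observers localized to $W_\sigma$ (your composition with $\id_{W_\sigma}$ just makes that localization explicit), and concludes that CSA and CWA hold if and only if $\Obs_{CSA}(L) \subseteq L$ and $\Obs_{CWA}(L) \subseteq L$. Your partition argument ensuring no cross-contamination between blocks is exactly the reasoning the paper carries over from the $\Obs_{SD}$ construction.
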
 
\begin{proof}
  We build rational observers, with a view-like component for each
  possible subset $\sigma$ of revoked users, corresponding to
  $\Obs_{SA}$ (resp. $\Obs_{WA}$) localized to $W_\sigma$,
  \textit{i.e.}  revocation actions are those in $\sigma$, anonymous
  actions are restricted to $P_\sigma$ and visible actions are
  extended to $V \cupdot \bigcupdot_{r \in \sigma}P(r)$:

 \[\Obs_{CSA} = \bigcupdot_{\sigma \in 2^{R}} \Obs_{SA}^{P_\sigma} \mbox{ and }
 \Obs_{CWA} = \bigcupdot_{\sigma \in 2^{R}} \Obs_{WA}^{P_\sigma}
\]

Then $L$ is {\em conditionally strongly anonymous} (CSA) if and only
if $\Obs_{CSA}(L) \subseteq L$ and $L$ is {\em conditionally weakly
  anonymous} (CWA) if and only if $\Obs_{CWA}(L) \subseteq L$, which
yields the result.\qed
\end{proof}

\section{Related works.}\label{sec:relw}
Along the lines, important connections between RIFPs and information
flow properties have been established, hence in this section, we will
focus on extending the picture.


Algorithms for verifying opacity in Discrete Event Systems
w.r.t. projections are presented together with applications in
\cite{Darondeau2007,Takai2009,Hadji2011,Lin2011}.
In~\cite{Darondeau2007}, the authors consider a concurrent version of
opacity and show that it is decidable for regular systems and secrets.
In~\cite{Takai2009}, the authors define what they called {\em secrecy
} and provide algorithms for verifying this property. A system
property satisfies {\em secrecy} if the property and its negation are
state-based opaque. In~\cite{Lin2011} the author provides an algorithm
for verifying state-based opacity (called {\em strong opacity}) and
shows how opacity can be instantiated to important security properties
in computer systems and communication protocols, namely anonymity and
secrecy.  In~\cite{Hadji2011}, the authors define the notion of
K-step opacity where the system remains state-based opaque in any step
up to depth-k observations that is, any observation disclosing the
secret has a length greater than k. Two methods are proposed for
verifying K-step opacity.  All these verification problems can be
uniformly reduced to the RIFP verification problem.


In~\cite{focardi01classification}, the authors provide decision
procedures for a large class of trace-based security properties that
can all be reduced to the RIFP verification problem for regular
languages. In~\cite{Meyden2007b}, decision procedures are given for
trace-based properties like non-deducibility, generalized
non-interference and forward correctability. The PSPACE-completeness
results for these procedures can be reduced to our results.

Concerning intransitive information flow (IIF), {\em non-interference}
(NI) and {\em intransitive non-interference} (INI) for deterministic
Mealy machines have been defined
in~\cite{rushby:channel.control}. In~\cite{Pinsky1995}, an algorithm
is provided for INI.  A formulation of INI in the context of
non-deterministic LTSs is given in~\cite{mullins00ai}, in the form of
a property called {\em admissible interference} (AI), which is
verified by reduction to a stronger version of NI. This property,
called {\em strong non-deterministic non-interference} (SNNI)
in~\cite{focardi01classification}, is applied to $N$ finite transition
systems where $N$ is the number of downgrading transitions of the
original system. This problem was also reduced to the opacity
verification problem w.r.t. Orwellian projections
in~\cite{mullins2014}.  In~\cite{Bossi04modellingdowngrading}, various
notions of trace-based INI declassification properties are considered
and compared. In contrast, our generic model is instantiable to a much
larger class of IIF properties.

In~\cite{Meyden2007}, the author has argued that Rushby's definition
of security for intransitive policies suffers from some flaw, and
proposed some stronger variations. The considered flaw relies to the
fact that, if $u \in W_{d_1}$ and $v \in W_{d_2}$, that is $u$
(resp. $v$) declassifies only $h_1 \in H(d_1)$ (resp. $h_2 \in
H(d_2)$), then the shuffle of $u$ and $v$ resulting of their
concurrent interaction will reveal the order in which $h_1$ and $h_2$
have been executed. The proof techniques used in this paper for
deciding the RIFP verification problem relies on their end-to-end
execution semantics and hence does not address this problem.






\section{Conclusion}\label{sec:conc}

In this paper we have introduced a language-theoretic model for
trace-based information flow properties, the RIFPs where observers are
modelled by rational transducers. Given a family $\La$ of languages,
our model provides a generic decidability result to the $RIF(\La)$
verification problem: Given $L \in \La$ and a security property $P$ in
RIF($\La$), does $L$ satisfy $P$? When $\La$ is the class ${\cal R}eg$
of regular languages, the problem is shown PSPACE-complete. This
result subsumes most decidability results for finite systems. In order
to prove that, we have shown that opacity properties and Mantel's
BSPs, two major generic models for trace-based IF properties, are
RIFPs.  We have illustrated the expressiveness of our model by showing
that the verification problem of INI and INISD can be reduced to the
verification problem of opacity w.r.t. a subclass of rational
observers called rational Orwellian observers. Finally we have
illustrated the applicability of our framework by providing the first
formal specification of {\em conditional anonymity}.

As far as we know, the only decidability
results of trace-based security properties for infinite systems are
presented~in~\cite{best-darondeau-gorrieri2011,best-darondeau2012,DSouzaHRS11}.
Hence, the approaches of the present paper and
\cite{best-darondeau2012,DSouzaHRS11} lead to the question (which is
so far open, to the best of our knowledge) of which infinite systems
have a decidable verification problem for BSPs. 
%
Another line for future work would be to investigate the links between
our framework and the logics studied in~\cite{SecLTL2012}
and~\cite{ClarksonFKMRS14}.

\newcommand{\etalchar}[1]{$^{#1}$}




\begin{thebibliography}{BKMR08}

\bibitem[BBB{\etalchar{+}}07]{Darondeau2007}
E.~Badouel, M.~A. Bednarczyk, A.~M. Borzyszkowski, B.~Caillaud, P.~Darondeau.
Concurrent Secrets.
\emph{Discrete Event Dynamic Systems} 17(4):425--446, 2007.

\bibitem[BD12]{best-darondeau2012}
E.~Best, P.~Darondeau.
Deciding Selective Declassification of Petri Nets.
In \emph{Proc. of 1st Conf. on Princples of Security and Trust (POST 2012)},
LNCS~7215, pp.~290--308.
Springer, 2012.

\bibitem[BDG11]{best-darondeau-gorrieri2011},
E.Best, P.~Darondeau, R.~Gorrieri. 
On the Decidability of Non Interference over Unbounded Petri Nets.
In \emph{Proc. of SecCo'11}, EPTCS~51, pp.~16--33, 2011.

\bibitem[Ber79]{berstel79}
J.~Berstel.
\emph{Transductions and context-free languages}.
Teubner, 1979.

\bibitem[BKMR08]{Mazare2008}
J.~Bryans, M.~Koutny, L.~Mazar{\'e}, P.~Y.~A. Ryan.
Opacity generalised to transition systems.
\emph{Int. J. Inf. Sec.} 7(6):421--435, 2008.

\bibitem[BPR04]{Bossi04modellingdowngrading}
A.~Bossi, C.~Piazza, S.~Rossi.
Modelling Downgrading in Information Flow Security.
In \emph{Proc. of CSFW'04},
pp.~187--201.
IEEE Computer Society Press, 2004.

\bibitem[CDM12]{Cassez2012}
F.~Cassez, J.~Dubreil, H.~Marchand.
Synthesis of opaque systems with static and dynamic masks.
\emph{Formal Methods in System Design} 40(1):88--115, 2012.

\bibitem[CFK{\etalchar{+}}14]{ClarksonFKMRS14}
M.~R. Clarkson, B.~Finkbeiner, M.~Koleini, K.~K. Micinski, M.~N. Rabe,
  C.~S{\'a}nchez.
Temporal Logics for Hyperproperties.
In \emph{Proc. of POST 2014}.
pp.~265--284.
2014.

\bibitem[DFK{\etalchar{+}}12]{SecLTL2012}
R.~Dimitrova, B.~Finkbeiner, M.~Kov\'acs, M.~N. Rabe, H.~Seidl.
Model Checking Information Flow in Reactive Systems.
In \emph{Proc. of VMCAI 2012},
pp.~169--185.
Springer, 2012.

\bibitem[DHRS11]{DSouzaHRS11}
D.~D'Souza, R.~Holla, K.~R. Raghavendra, B.~Sprick.
Model-checking trace-based information flow properties.
\emph{Journal of Computer Security} 19(1):101--138, 2011.

\bibitem[DS08]{danezis2008}
G.~Danezis, L.~Sassaman.
How to Bypass Two Anonymity Revocation Schemes.
In Borisov and Goldberg (eds.), \emph{Privacy Enhancing Technologies}.
Lecture Notes in Computer Science~5134, pp.~187--201.
Springer, 2008.

\bibitem[EM65]{elgot65}
C.~C. Elgot, J.~E. Mezei.
On relations defined by generalized finite automata.
\emph{{IBM Journal Res. Develop.}} 9:47--68, 1965.

\bibitem[FG01]{focardi01classification}
R.~Focardi, R.~Gorrieri.
Classification of Security Properties (Part I: Information Flow).
In \emph{Foundations of Security Analysis and Design}.
LNCS~2171, pp.~331--396.
Springer, 2001.

\bibitem[Lin11]{Lin2011}
F.~Lin.
Opacity of discrete event systems and its applications.
\emph{Automatica} 47(3):496--503, 2011.

\bibitem[Man00]{Mantel00}
H.~Mantel.
Possibilistic Definitions of Security - An Assembly Kit.
In \emph{Proceedings of CSFW'00},
pp.~185--199.
IEEE Computer Society, 2000.

\bibitem[Man01]{Mantel01}
H.~Mantel.
Information Flow Control and Applications - Bridging a Gap.
In Oliveira and Zave (eds.), \emph{Proc. of FME 2001}.
LNCS~2021, pp.~153--172.
Springer, 2001.

\bibitem[vdM07]{Meyden2007}
R.~van~der Meyden.
What, Indeed, Is Intransitive Noninterference?
In Biskup and Lopez (eds.), \emph{ESORICS}.
LNCS~4734, pp.~235--250.
Springer, 2007.

\bibitem[Mul00]{mullins00ai}
J.~Mullins.
Non-Deterministic Admissible Interference.
\emph{Journal of Universal Computer Science} 6(11):1054--1070, 2000.

\bibitem[MY14]{mullins2014}
J.~Mullins, M.~Yeddes.
Opacity with Orwellian Observers and Intransitive Non-interference.
In \emph{12th IFAC - IEEE International Workshop on Discrete Event Systems
  (WODES'14)}.
Pp.~344--349.
2014.

\bibitem[MZ07]{Meyden2007b}
R.~van~der Meyden, C.~Zhang.
Algorithmic Verification of Noninterference Properties.
\emph{Electr. Notes Theor. Comput. Sci.} 168:61--75, 2007.

\bibitem[Pin95]{Pinsky1995}
S.~Pinsky.
Absorbing covers and intransitive non-interference.
In \emph{Proceedings of the 1995 IEEE Symposium on Security and Privacy}.
SP '95, pp.~102--113.
IEEE Computer Society, Washington, DC, USA, 1995.

\bibitem[Rus92]{rushby:channel.control}
J.~Rushby.
Noninterference, transitivity and channel-control security policies.
Technical report~CSL-92-02, SRI International, Menlo Park CA, USA, Dec. 1992.

\bibitem[Sak09]{sakarovitch09}
J.~Sakarovitch.
\emph{Elements of automata theory}.
Cambridge University Press, 2009.

\bibitem[SH11]{Hadji2011}
A.~Saboori, C.~N. Hadjicostis.
Verification of K-Step Opacity and Analysis of Its Complexity.
\emph{IEEE T. Automation Science and Engineering} 8(3):549--559, 2011.

\bibitem[SS96]{Schneider96}
S.~Schneider, A.~Sidiropoulos.
CSP and anonymity.
In \emph{In European Symposium on Research in Computer Security}.
Pp.~198--218.
Springer-Verlag, 1996.

\bibitem[TK09]{Takai2009}
S.~Takai, R.~Kumar.
Verification and synthesis for secrecy in discrete-event systems.
In \emph{Proceedings of ACC'09}, pp.~4741--4746.
IEEE Press, Piscataway, NJ, USA, 2009.

\end{thebibliography}
\end{document}